\newcommand{\labeltarget}[1]{\Hy@raisedlink{\hypertarget{#1}{}}}
\renewcommand{\epsilon}{\varepsilon}
\providecommand{\OPT}{\mathrm{OPT}}
\providecommand{\MSD}{\ensuremath{\mathrm{MSD}}\xspace}
\providecommand{\MSDf}{\ensuremath{\mathrm{MSD}+f}\xspace}
\newtheorem*{rep@theorem}{\rep@title}
\newcommand{\newreptheorem}[2]{%
\newenvironment{rep#1}[1]{%
 \def\rep@title{#2 \ref{##1}}%
 \begin{rep@theorem}}%
 {\end{rep@theorem}}}
\theoremstyle{plain}
\newtheorem{theorem}{Theorem}
\newtheorem{lemma}[theorem]{Lemma}
\newtheorem{corollary}[theorem]{Corollary}
\theoremstyle{definition}
\title{Local Search for Max-Sum Diversification}
\author{
  Alfonso Cevallos\footnote{École Polytechnique Fédérale de Lausanne (EPFL), Switzerland. \texttt{alfonso.cevallosmanzano@epfl.ch}} \quad {\small }  \quad  Friedrich Eisenbrand\footnote{\'Ecole Polytechnique Fédérale de Lausanne (EPFL), Switzerland. \texttt{friedrich.eisenbrand@epfl.ch}} \quad {\small } \quad Rico Zenklusen\footnote{Swiss Federal Institute of Technology in Zurich (ETH Zurich), Switzerland. \texttt{ricoz@math.ethz.ch}} \\[.1cm]
}\date{\today }
\begin{document}
\maketitle
\thispagestyle{empty}

\begin{abstract}

We provide simple and fast polynomial time approximation schemes (PTASs) for several variants of the \emph{max-sum diversification} problem which, in its most basic form, is as follows: Given $n$ points $p_1,\dots,p_n ∈ \mathbb{R}^d$ and an integer $k$, select $k$ points such that the average Euclidean distance between these points is maximized. This problem commonly appears in information retrieval and web-search in order to select a \emph{diverse} set of points from the input. In this context, it has recently received a lot of attention. 
  
We present new techniques to analyze natural local search algorithms. This leads to a $(1-O(\frac{1}{k}))$-approximation for distances of negative type, even subject to any matroid constraint of rank $k$, in time $O(n k^2 \log k)$, when assuming that distance evaluations and calls to the independence oracle are constant time. Negative type distances include as special cases Euclidean distances and many further natural distances. Our result easily transforms into a PTAS. It improves on the only previously known PTAS for this setting, which relies on convex optimization techniques in an $n$-dimensional space and is impractical for large data sets. In contrast, our procedure has an (optimal) linear dependence on $n$. 

Using generalized exchange properties of matroid intersection, we show that a PTAS can be obtained for matroid intersection constraints as well. Moreover, our techniques, being based on local search, are conceptually simple and allow for various extensions. In particular, we get asymptotically optimal $O(1)$-approximations when combining the classic dispersion function with a monotone submodular objective, which is a very common class of functions to measure diversity and relevance. This result leverages recent advances on local search techniques based on proxy functions to obtain optimal approximations for monotone submodular function maximization subject to a matroid constraint.

\end{abstract}

\section{Introduction}
 
When dealing with large data sets, it is often crucial to be able to extract a smaller well-diversified subset of the data.  This is a classic problem in information retrieval, and appears in many natural settings.
For example, a news website often presents the reader with a small list of highlighted stories that should be as relevant as possible to the reader. However, at the same time, the shown news stories should exhibit a certain diversity.
The analogous problem appears in the context of search engines, showing a small set of hits that are at the same time relevant and diverse.
Similarly, to get a better overview of a large dataset, it is often of
interest to exhibit a small set of entries that reflect the typical
entries found in the data set. Again, a relevant and diverse subsample
is desired.  Not surprisingly, diversity maximization became an
important concept in information retrieval, computational geometry and
operations research.

In this paper we focus on one classic diversity measure, namely the \emph{dispersion}, which has been studied in the operations research community~\cite{hassin1997approximation,ravi1994heuristic,birnbaum2009improved,fekete2004maximum} and is currently receiving considerable attention in the information retrieval literature~\cite{gollapudi2009axiomatic,bhattacharya2011consideration,borodin2012max}.
More formally, we are given a finite ground set $X$ of size $n$ and a symmetric nonnegative function $d:X\times X \rightarrow \mathbb{R}_{\geq 0}$ between pairs of $X$ satisfying $d(a,a)=0$ for $a\in X$. Such a function $f$ is called a \emph{distance function}, and we highlight that it does not necessarily have to be metric. The dispersion of a set $A\subseteq X$ is the sum of all pair-wise distances within $A$; in short, we denote the dispersion of $A$ by 
\begin{equation*}
d(A) \coloneqq \sum_{\{a,b\}\subseteq A} d(a,b)\enspace.
\end{equation*}
Diversity maximization problems with respect to the dispersion are known under several names, in particular \emph{max-sum diversification} (in short \MSD ), \emph{max dispersion} or \emph{remote clique}. In its most basic form, the task is to maximize the dispersion under a cardinality constraint of rank $k$, i.e.
\begin{equation*}
\max\{d(A) \mid A\subseteq X, |A| \leq k\}\enspace.
\end{equation*}
For brevity, we denote this max-sum diversification problem under a cardinality constraint by $\MSD_k$. A natural generalization of $\MSD_k$ is obtained by replacing the cardinality constraint with the requirement that the set $A$ must be independent in a given matroid $M=(X,\mathcal{I})$ which, as is usual, is assumed to be given by an independence oracle.\footnote{We recall that a matroid $M=(X,\mathcal{I})$ consists of a finite ground set $X$ and a non-empty family $\mathcal{I}\subseteq 2^X$ of subsets called \emph{independent sets}, satisfying: \begin{enumerate*}[label=(\roman*)] \item if $A\in \mathcal{I}, B\subseteq A \Rightarrow B \in \mathcal{I}$, and \item if $A,B\in \mathcal{I}$ with $|A| > |B|$ $\Rightarrow \exists a\in A\setminus B$ such that $B\cup \{a\}\in \mathcal{I}$  \end{enumerate*}. For more information related to matroids we refer to~\cite{schrijver2003combinatorial}.}
Matroid constraints cover natural relevant cases, for instance the items in $X$ can be partitioned into several subgroups, and a certain number of elements have to be chosen in each subgroup.

A wide set of distance functions $d$ have been considered in this context.
A very common distance type is obtained by first representing the elements of the ground set $X$ by vectors in a high-dimensional space $\mathbb{R}^q$, and then selecting a norm in $\mathbb{R}^q$ and using the corresponding induced distances (see, e.g., \cite{manning2008introduction,salton1983introduction}).

Recently, many results have been developed for various combinations of constraints, distances, and objectives. In particular, constant-factor approximations have been obtained for $\MSD_k$ for metric distances $d$. More precisely, Ravi, Rosenkrantz and Tayi~\cite{ravi1994heuristic} showed that a natural greedy procedure is a $4$-approximation, which was later shown by Birnbaum and Goldman~\cite{birnbaum2009improved} to even achieve an approximation factor of $2$, asymptotically. Prior to this improved analysis, Hassin, Rubinstein and Tamir~\cite{hassin1997approximation} presented a different $2$-approximation. An approximation factor of $2$ is tight for this problem assuming that the \emph{planted clique} problem~\cite{alon2011inapproximability} is hard (see~\cite{borodin2012max}). Moreover, Fekete and Meijer~\cite{fekete2004maximum} showed that if the distance $d$ stems from the $\ell_1$-norm in a constant-dimensional space, then a PTAS can be obtained.\footnote{We recall that a polynomial-time approximation scheme (PTAS) is an algorithm that, for any fixed $\epsilon >0$, returns a $(1-\epsilon)$-approximation in polynomial time.}

For \MSD under a matroid constraint, Abbassi, Mirrokni and Thakur~\cite{abbassi2013diversity} and Borodin, Lee, and Ye~\cite{borodin2012max} recently obtained $\frac{1}{2}$-approximations if $d$ is metric, which were the first constant-factor approximations for this setting. The approximation factor of $\frac{1}{2}$ is as well tight here, because this setting captures $\MSD_k$ with metric distances.

Motivated by various applications, interest also arose in generalized objective functions. In particular Bhattacharya, Gollapudi and Munagala~\cite{bhattacharya2011consideration} considered an objective that consists of the dispersion plus an additive linear term, which allowed for also representing \emph{scores of documents}.
Even more generally, Borodin, Lee, and Ye~\cite{borodin2012max} studied the sum of the dispersion function with a monotone submodular function, and showed that even for this generalization a $\frac{1}{2}$-approximation can be achieved for this objective under a matroid constraint via local search.

Very recently, the authors showed~\cite{cevallos_2016_max-sum} that considerably stronger results can be achieved under a matroid constraint for many frequently used distances $d$ that have the property of being of \emph{negative type}, which is defined as follows.
Let $D\in \mathbb{R}_{\geq 0}^{n\times n}$ be the distance matrix corresponding to $d$, i.e., $D_{a,b}=d(a,b)$ for $a,b\in X$. Then, $d$ is of negative type if
\begin{equation*}
x^T D x \leq 0 \qquad \forall x\in \mathbb{R}^n \text{ with } \sum_{i=1}^n x_i=0\enspace.
\end{equation*}
Commonly used distances of negative type include the ones induced by $\ell_1$ and $\ell_2$ norms, the cosine distance or the Jaccard distance~\cite{Pekalska:2005:DRP:1197035}.
For more information on negative type distances, we refer the interested reader to~\cite{schoenberg1938metricI,schoenberg1938metricII,deza1990metric,DezaLaurent97}.
Norms corresponding to negative distance types have as well been used for similarity measures via lower-dimensional bit vectors stemming from sketching techniques~\cite{charikar2002similarity}.
For negative type distances, a PTAS for \MSD can be achieved under a matroid constraint, based on rounding solutions obtained by a convex relaxation of the problem~\cite{cevallos_2016_max-sum}.
Whereas this is essentially optimal in terms of approximation quality, the employed technique requires to solve $n$-dimensional convex optimization problems, which is impractical for large data sets and large data sets are usually to be dealt with in web-search and information retrieval. This motivates the guiding questions of this paper. 
\begin{enumerate}
\item Can one profit from the additional structure of negative type distances to obtain \emph{efficient} PTASs  that are suitable for \emph{large-scale} problems? 
\item Can one obtain such algorithms in other, more general relevant  settings,  beyond matroid constraints? 
\end{enumerate}

%
%
%

%
%
%

\subsection*{Our results}

Our key contribution is the analysis of conceptually simple and
classic local search techniques for \MSD with negative type distances.
These procedures are easy to implement and considerably faster than the convex optimization approach suggested in~\cite{cevallos_2016_max-sum}.
First, this gives us a strong approximation algorithm for matroid constraints, which implies a PTAS.  

\begin{theorem}\label{thm:PTASmat}
There is a $(1-\frac{5}{k})$-approximation for \MSD with negative type distances subject to a matroid constraint of rank $k$, running in $O(n k^2 \log k)$ time, when counting distance evaluations and calls to the independence oracle as unit time.
\end{theorem}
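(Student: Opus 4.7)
The plan is to analyze the natural single-swap local search: starting from an initial basis $A$ (e.g.\ the output of a greedy procedure), repeatedly swap $a \in A$ for $b \notin A$ whenever $A - a + b$ is a basis and $d(A-a+b) > d(A) + \tau$, where the threshold $\tau$ is chosen proportional to $d(A)/k$ so as to bound the number of iterations. At termination, $A$ is an approximate local optimum.

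Fix such an $A$ and an optimum basis $B$, and write $A' = A \setminus B$, $B' = B \setminus A$, $r = |A'| = |B'|$, and $w(u) := \sum_{c \in A} d(u, c)$. By Brualdi's matroid exchange theorem, there exists a bijection $\pi \colon A' \to B'$ such that $A - a + \pi(a)$ is a basis for every $a \in A'$. (Approximate) local optimality of $A$ then gives $w(\pi(a)) - w(a) \leq d(a,\pi(a)) + O(\tau)$; summing over $a \in A'$,
\begin{equation*}
  \sum_{b \in B'} w(b) - \sum_{a \in A'} w(a) \;\leq\; w(M) + O(r\tau),
  \qquad w(M) := \sum_{a \in A'} d(a, \pi(a)).
\end{equation*}
On the other hand, negative type of $d$ is exactly the statement that the quadratic form $f(x) := \tfrac12 x^\transpose D x$ is concave on the hyperplane $\{x \in \mathbb{R}^n : \sum_i x_i = k\}$, which contains both $\mathbf{1}_A$ and $\mathbf{1}_B$. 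First-order concavity along the segment $[\mathbf{1}_A,\mathbf{1}_B]$ gives
\begin{equation*}
  d(B) - d(A) \;=\; f(\mathbf{1}_B) - f(\mathbf{1}_A) \;\leq\; \nabla f(\mathbf{1}_A)^\transpose (\mathbf{1}_B - \mathbf{1}_A) \;=\; \sum_{b \in B'} w(b) - \sum_{a \in A'} w(a).
\end{equation*}
Combining the two displays yields $d(B) - d(A) \leq w(M) + O(r\tau)$, so everything hinges on bounding $w(M)$.

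To bound $w(M)$ I would invoke Schoenberg's characterization: a negative-type distance is of the form $d(x,y) = \|\phi(x) - \phi(y)\|^2$ for some embedding $\phi$ into a Hilbert space, and from $\|u-v\| \leq \|u-w\| + \|w-v\|$ one obtains the squared triangle inequality $d(a,b) \leq 2 d(a,c) + 2 d(c,b)$ for all $a,b,c$. Averaging over $c \in A$ (or $A \setminus \{a\}$ for a slightly tighter bound) yields $d(a, \pi(a)) \leq \tfrac{2}{k}\bigl(w(a) + w(\pi(a))\bigr)$, hence $w(M) \leq \tfrac{2}{k}\bigl(\sum_{a \in A'} w(a) + \sum_{b \in B'} w(b)\bigr)$. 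Since $\sum_{a \in A'} w(a) = d(A', A) \leq 2 d(A)$, and since $\sum_{b \in B'} w(b) \leq \sum_{a \in A'} w(a) + w(M)$ by the local-optimality sum above, this is a self-bounding inequality for $w(M)$; solving it and using $d(A) \leq d(B)$ gives $w(M) \leq (5/k)\, d(B)$ after absorbing the negligible $\tau$-slack and using the negative-type inequality $d(A', B') \geq d(A') + d(B')$ to tighten the constant. Plugging back, $d(A) \geq (1 - 5/k)\, d(B)$.

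For the running time, maintain the array $(w(u))_{u \in X}$: initial computation costs $O(nk)$ and each swap updates it in $O(n)$; scanning all $O(nk)$ candidate pairs for an improving swap costs $O(nk)$ per iteration. With $\tau \asymp d(A)/k$ and a constant-factor initial solution, each performed swap multiplies $d(A)$ by at least $1 + \Omega(1/k)$, giving $O(k \log k)$ iterations and total time $O(nk^2 \log k)$. The main obstacle I anticipate is driving the constant in the bound on $w(M)$ down to the claimed $5/k$: the term $\sum_{b \in B'} w(b)$ cannot be bounded by $d(B)$ alone (since $B'$ carries cross-terms to $A$), and the argument must simultaneously exploit the Schoenberg triangle inequality, the self-bounding from local optimality, and the negative-type lower bound on $d(A', B')$ in a coordinated way.
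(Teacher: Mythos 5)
Your skeleton is essentially the paper's: a Brualdi bijection, local optimality summed over the exchanges, the negative-type inequality $d(A)+d(B)\le d(A,B)$ (your first-order concavity step is exactly inequality~\eqref{eq:negTypeIneqSet} for equal-size sets), and a bound on the matching term $w(M)=\sum_{a}d(a,\pi(a))$. The first genuine gap is that your bound on $w(M)$ is too lossy to reach $\frac{5}{k}$. Schoenberg plus the squared triangle inequality gives $d(a,\pi(a))\le\frac{2}{k}\left(w(a)+w(\pi(a))\right)$, and your self-bounding argument then yields $(1-\frac{2}{k})\,w(M)\le\frac{4}{k}\sum_{a\in A'}w(a)\le\frac{8}{k}\,d(A)$, hence only $d(A)\ge(1-\frac{8}{k-2})\,d(B)$; you flag this as the main obstacle, but the proposed fix via $d(A',B')\ge d(A')+d(B')$ is not worked out and it is unclear how it would enter the bound on $w(M)$. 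The paper's Lemma~\ref{lem:cheapMatching} instead applies the negative-type inequality to the set pair $A$ versus $\{a,\pi(a)\}$, giving $d(A,a)+d(A,\pi(a))\ge\frac{2}{k}d(A)+\frac{k}{2}d(a,\pi(a))$; summed over all $a\in A$, the $k\cdot\frac{2}{k}d(A)=2d(A)$ term exactly cancels $d(A,A)$, yielding $w(M)\le\frac{2}{k}d(A,B)$ with no residual $d(A)$ term. Your per-element estimate is weaker than this one by precisely the $\frac{4}{k^2}d(A)$ that makes the cancellation work, and that loss is what costs you the constant: with the tighter bound, local optimality gives $(1-\frac{2}{k})d(A,B)\le 2d(A)$ and hence $d(A)\ge\frac{1-2/k}{1+2/k}\,d(\OPT)\ge(1-\frac{4}{k})\,d(\OPT)$.

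The second gap is the threshold rule. With $\tau\asymp d(A)/k$ the accumulated slack over the $r\le k$ Brualdi exchanges is $r\tau=\Theta(d(A))$, which swamps the $O(d(\OPT)/k)$ error budget you can afford; shrinking $\tau$ to $\Theta(d(A)/k^2)$ repairs the approximation but inflates the iteration count to $\Theta(k^2)$ per constant-factor gain, breaking the $O(nk^2\log k)$ bound. You also rely on a constant-factor initial solution, which is not obviously available for non-metric negative-type distances (and if $d(A)=0$ initially the multiplicative progress argument never starts). The paper avoids all of this by always performing the \emph{best} single swap: its improvement is at least the average over the $k$ Brualdi exchanges, i.e.\ $d(A_{i+1})-d(A_i)\ge\frac{1}{k}\left(\left(1-\frac{4}{k}\right)d(\OPT)-d(A_i)\right)$, so the gap to $(1-\frac{4}{k})d(\OPT)$ contracts by a factor $(1-\frac{1}{k})$ per iteration starting from an \emph{arbitrary} basis. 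Then $\ell=O(k\log k)$ iterations, each implementable in $O(nk)$ time via the update formula $d(A-a+b)=d(A)+d(A,b)-d(a,b)-d(a,A)$, turn $1-\frac{4}{k}$ into $1-\frac{5}{k}$ and give the stated running time.
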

The above theorem indeed implies a PTAS: This is clear if $k\geq \frac{5}{\epsilon}$, where $\epsilon>0$ is the error parameter; otherwise, feasible solutions only have constant size and one can enumerate over all possible solutions. This running time is \emph{linear} in $n$. In light of the fact that $k$ is much smaller than $n$  this running time is very desirable for large $n$.

\medskip

Furthermore, our techniques  show  that local search yields   a PTAS  for \MSD with negative type distances even for matroid intersection constraints. Whether a PTAS exists in this setting was not known before. 

\begin{theorem}\label{thm:PTASmatInt}
There is a PTAS for \MSD with negative type distances subject to a matroid intersection constraint.
\end{theorem}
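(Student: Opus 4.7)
The plan is to extend the single-matroid local search underlying Theorem~\ref{thm:PTASmat} to the matroid intersection setting by enlarging the swap neighborhood. Concretely, for a suitably chosen constant $p=p(\epsilon)$ I would run a local search that maintains a common independent set $A$ of the target size $k$ and at each step looks for subsets $S\subseteq A$, $T\subseteq X\setminus A$ with $|S|=|T|\leq p$ such that $(A\setminus S)\cup T\in \mathcal{I}_1\cap\mathcal{I}_2$ and $d((A\setminus S)\cup T)>d(A)$, performing such a swap whenever one exists. To keep the running time polynomial for fixed $\epsilon$, I would only accept swaps that improve the objective by more than a polynomially small additive threshold, a standard rescaling that costs a negligible factor in the approximation ratio while bounding the number of iterations by a polynomial. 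For $k$ below a constant depending on $\epsilon$, feasible solutions have constant size and one enumerates.

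The core of the argument is to bound $d(A)$ in terms of $d(\OPT)$ when $A$ is $p$-locally optimal. I would follow the template of the matroid case: aggregate many local-optimality inequalities $d(A)\geq d((A\setminus S_i)\cup T_i)$ and expand the right-hand sides as quadratic forms in the indicator vectors $\mathbf{1}_{(A\setminus S_i)\cup T_i}-\mathbf{1}_A$. To generate the required collection, I would invoke a generalized exchange property for matroid intersection, to the effect that for any two common independent sets $A$ and $O$ of size $k$ one can exhibit a family of feasible swaps $(S_i,T_i)$ with $S_i\subseteq A\setminus O$, $T_i\subseteq O\setminus A$ and $|S_i|=|T_i|\leq p$, such that every element of $A\setminus O$ and every element of $O\setminus A$ is covered the same (bounded) number of times across the family. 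With such a balanced cover in hand, summing the local-optimality inequalities and invoking the negative-type hypothesis $x^{\transpose} D x\leq 0$ for zero-sum $x$ exactly as in the proof of Theorem~\ref{thm:PTASmat} yields a bound of the form $d(A)\geq \bigl(1-O(p/k)\bigr)d(\OPT)$, which becomes $(1-\epsilon)$ for $p=\Theta(1/\epsilon)$ and $k$ large.

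The main obstacle is the combinatorial ingredient: producing the balanced $p$-swap cover of $A\triangle O$ jointly feasible for both matroids. The classical single-element exchanges for common independent sets (of Krogdahl/Brualdi/Cunningham type) give bijections preserving independence in one of the two matroids but not necessarily in both, and it is precisely the requirement of simultaneous feasibility that forces swap size to grow. I expect that the $p$-swap exchange lemmas developed for submodular maximization over $k$-matroid intersection by Lee, Sviridenko and Vondr\'ak can be refined to yield the balanced cover needed here, perhaps via a fractional averaging argument over a polytope of feasible $p$-swaps followed by an integral rounding. Once this exchange lemma is in place, the negative-type analysis carries over verbatim from the single-matroid case, and the PTAS follows.
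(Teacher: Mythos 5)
Your high-level plan — local search with constant-size swaps, an exchange lemma for matroid intersection, and the negative-type quadratic-form inequality — is exactly the paper's strategy, and you correctly anticipate that the exchange lemma is the crux. But there are several concrete gaps in the route you sketch.

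First, the exchange lemma you hope for does not exist in the balanced form you describe. You want swaps $(S_i,T_i)$ with $S_i\subseteq A\setminus O$, $T_i\subseteq O\setminus A$, $|S_i|=|T_i|\leq p$, covering both sides uniformly. The available exchange lemma (Lemma~3.3 of Chekuri, Vondr\'ak, Zenklusen, building on Lee--Sviridenko--Vondr\'ak) is intrinsically \emph{asymmetric}: it produces sets $P_i$ with $|P_i\cap A|\leq p$ but $|P_i\cap O|\leq p-1$, and the fractional cover is $\sum_i\lambda_i\chi^{P_i}=\frac{p}{p-1}\chi^{A\setminus O}+\chi^{O\setminus A}$, so the $A$-side is overcounted by a factor $\frac{p}{p-1}$. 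This asymmetry is not an artifact; it is the price paid for getting common independence of $A\triangle P_i$ in both matroids, and it feeds directly into the approximation guarantee as a $\Theta(1/p)$ loss term. Your claimed bound $d(A)\geq(1-O(p/k))d(\OPT)$ ignores this: the correct form is roughly $\bigl(1-\Theta(\tfrac1p)-\Theta(\tfrac pk)\bigr)^2$, so one cannot take $p$ constant and let $k\to\infty$ to get $1-o(1)$ — one has to take $p=\Theta(1/\epsilon)$ \emph{and} $k\geq\Theta(1/\epsilon^2)$, and the $1/p$ term must appear explicitly.

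Second, the size of $A$ cannot simply be pinned at the target $k$. Because the exchange sets have $|P_i\cap O|\leq p-1<|P_i\cap A|$, the swap $A\mapsto A\triangle P_i$ may shrink $A$; moreover, unlike the single-matroid case, maximal common independent sets need not have maximum cardinality. The paper's algorithm therefore re-augments $A$ to a maximal common independent set after every exchange and works only with the guarantee $|A|\geq k/2$, which propagates into the constants of the analysis. A local search restricted to cardinality-preserving swaps $|S|=|T|$, as you propose, could stall at a solution from which no size-preserving improving swap exists even though shrinking-then-augmenting would improve.

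Third, the negative-type analysis does not carry over verbatim. In the single-matroid case the error term is a sum of singleton distances $d(a,\pi(a))$, bounded by a short matching argument. Here the error term is $\sum_i\lambda_i\, d(P_i)$, the dispersion \emph{inside} each exchange set, and bounding it requires a new lemma (the paper's Lemma~\ref{lem:boundWithinPi}) that again invokes the negative-type inequality in a more involved way, together with a case analysis on whether $d(A)$ is already a constant fraction of $d(\OPT)$ and a square-root convergence trick. Your thresholding idea for polynomial running time is workable in principle, but it is a separate, additional argument; the paper instead derives a geometric convergence rate directly from the improvement lemma. In short, the skeleton is right, but the exchange lemma you need is the asymmetric Chekuri--Vondr\'ak--Zenklusen one, the guarantee has an unavoidable $1/p$ term, the cardinality of $A$ must be handled via re-augmentation, and the error-term and convergence analysis require genuinely new lemmas beyond the single-matroid case.
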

%
%
Our PTAS is a variation of a classic local search technique, which was used by Lee, Sviridenko, and Vondr\'ak~\cite{lee_2010_submodular} in the context of submodular maximization, which considers at each iteration exchanges of constant size. This is the first non-trivial approximation result for matroid intersection constraints in this context, and its purpose is mainly to show that very strong approximation guarantees are possible for this constraint type.
However, due to these larger-size exchanges, the suggested algorithm may have a large (polynomial) running time, depending on the size of the problem and the error parameter.

\medskip

Finally, 
by  combining our results with recent non-oblivious local search techniques by Filmus and Ward~\cite{filmus2014monotone}, and Sviridenko, Vondr{\'a}k and Ward~\cite{sviridenko2015optimal}, we obtain close-to-optimal approximation guarantees for \MSD with an objective function being a sum of the dispersion for negative type distances and a monotone submodular function, subject to a matroid constraint. Such objectives allow for balancing diversity and \emph{relevance}. Combinations of diversity and a linear function  have  been studied previously, see~\cite{bhattacharya2011consideration,borodin2012max}.  
The approximation factor we obtain depends on the \emph{curvature} of the submodular function, which yields a smooth interpolation between linear functions and submodular functions. We give a formal definition of curvature in Section~\ref{sec:submodObj}. 

\begin{theorem}\label{thm:msd+f}
Consider \MSD subject to a matroid constraint of rank $k$, with respect to an objective $g=d+f$, where $d$ is the dispersion for a negative type distance and $f$ is a nonnegative, monotone submodular function of curvature $c$.
Let $\lambda_d=\frac{d(\OPT)}{g(\OPT)}$ and $\lambda_f=\frac{f(\OPT)}{g(\OPT)}$, where $\OPT$ is an optimal solution to the problem of maximizing $g$ subject to the matroid constraint. Then, for any $\epsilon >0$, there is an efficient algorithm returning a solution of approximation guarantee
\begin{equation*}
1-\lambda_d \frac{4}{k}-\lambda_f \frac{c}{e} - \epsilon \geq 1-\max\left\{\frac{4}{k}, \frac{c}{e}\right\} - \epsilon \enspace.
\end{equation*}
\end{theorem}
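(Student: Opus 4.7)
The plan is to run a single non-oblivious local-search procedure on a combined potential $h = d + F$, where $F$ is the auxiliary (proxy) objective for $f$ introduced by Filmus and Ward in the unit-curvature case, and extended by Sviridenko, Vondr\'ak and Ward to handle arbitrary curvature $c$. The key observation is that both the negative-type dispersion analysis behind Theorem~\ref{thm:PTASmat} and the non-oblivious submodular analyses certify their guarantees by summing local-optimality inequalities along single-element basis exchanges. Since the domain of $h$ is the set of bases of the common matroid, and each swap changes $d$, $F$ and $h$ additively, a single local-search scheme on $h$ can drive both guarantees simultaneously.

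Concretely, I would start from an arbitrary basis $A$ and repeatedly perform a swap $A \to A - a + b$ with $a \in A$ and $b \notin A$ preserving independence and improving $h$ by at least a polynomially small additive slack $\delta$ (scaled by the current value of $g$). Standard approximate-local-search arguments show termination after polynomially many swaps, and the proxy $F$ can be evaluated in polynomial time, either in closed form or by sampling to sufficient precision.

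At termination, Brualdi's matroid exchange theorem supplies a bijection $\pi : \OPT \setminus A \to A \setminus \OPT$ such that $A - \pi(o) + o$ is a basis for every $o \in \OPT \setminus A$. Summing the local-optimality inequalities $h(A) \ge h(A - \pi(o) + o) - \delta$ over $o \in \OPT \setminus A$ decouples, by linearity, into a $d$-part and an $F$-part. The $d$-part is analyzed exactly as in the proof of Theorem~\ref{thm:PTASmat}, via the quadratic-form bound using the negative-type property of $D$, yielding
\[
d(A) \ge \left(1 - \tfrac{4}{k}\right) d(\OPT);
\]
the sharpening of the leading constant from $5$ to $4$ is possible because the approximate-termination slack is now absorbed into the final $\epsilon$. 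The $F$-part gives the Filmus-Ward / Sviridenko-Vondr\'ak-Ward inequality $f(A) \ge (1 - c/e)\, f(\OPT) - O(\epsilon)\, g(\OPT)$. Adding the two bounds yields $g(A) \ge (1 - \lambda_d\, 4/k - \lambda_f\, c/e - \epsilon)\, g(\OPT)$, and the $\max$-form follows from $\lambda_d + \lambda_f = 1$ together with $\lambda_d (4/k) + \lambda_f (c/e) \le \max\{4/k,\, c/e\}$.

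The main obstacle is showing that a single common swap rule certifies \emph{both} local-optimality guarantees. The dispersion analysis needs near-stationarity with respect to the pairwise-distance quadratic form, whereas the non-oblivious submodular analysis needs near-stationarity with respect to the proxy $F$, whose marginals differ substantially from those of $f$. Packaging the two requirements into the single potential $h = d + F$ is precisely what makes the analyses add along $\pi$, since each swap contributes independently to the two inequalities. A secondary technical issue is the stochastic evaluation of $F$ together with the choice of the local-search slack $\delta$: it must be small enough for both certificates to hold at the desired precision, yet large enough to guarantee polynomial termination; both points are handled by routine concentration and scaling.
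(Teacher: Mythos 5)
Your overall architecture is the same as the paper's: a non-oblivious local search on a single combined potential, a Brualdi bijection $\pi$ between the local optimum $A$ and $\OPT$, and a sum of one-sided exchange inequalities. But the central step of your argument---the ``decoupling''---has a genuine gap. Local optimality of $A$ with respect to $h=d+F$ only tells you that $\sum_a\bigl(d(A)-d(A-a+\pi(a))\bigr)+\sum_a\bigl(F(A)-F(A-a+\pi(a))\bigr)\geq 0$; it does \emph{not} make $A$ (approximately) locally optimal for $d$ alone or for $F$ alone, since every improving swap for $F$ may strictly worsen $d$ and vice versa. Hence you cannot ``analyze the $d$-part exactly as in the proof of Theorem~\ref{thm:PTASmat}'' to conclude $d(A)\geq(1-\frac{4}{k})d(\OPT)$, nor conclude $f(A)\geq(1-\frac{c}{e})f(\OPT)-O(\epsilon)g(\OPT)$ separately; neither standalone bound holds at a local optimum of $h$. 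The only usable statements are the two \emph{one-sided} inequalities in which the exchange differences appear explicitly: Lemma~\ref{lem:matAverageImp} gives $d(A)\geq(1-\frac{4}{k})d(\OPT)+(1-\frac{2}{k})\sum_a\bigl(d(A)-d(A-a+\pi(a))\bigr)$, and the Sviridenko--Vondr\'ak--Ward inequality~\eqref{ineq:sviridenko} gives $f(A)\geq(1-\frac{c}{e})f(\OPT)+\sum_a\bigl(F(A)-F(A-a+\pi(a))\bigr)$, and one adds these and then invokes local optimality to kill the sum of exchange terms.

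This reveals a second, quantitative problem with your choice of potential: the exchange term in the dispersion inequality carries coefficient $1-\frac{2}{k}$, while the one for $F$ carries coefficient $1$. For the summed exchange terms to form the difference of a single potential that the algorithm optimizes, the potential must be $G=(1-\frac{2}{k})d+F$ (as in the paper), not $h=d+F$. With your $h$, after adding the two inequalities you are left with an uncontrolled residue $-\frac{2}{k}\sum_a\bigl(d(A)-d(A-a+\pi(a))\bigr)$, which is negative precisely when the $F$-improving swaps degrade $d$; bounding it via Lemmas~\ref{lem:negTypeIneq} and~\ref{lem:cheapMatching} costs an additional additive $\Theta(\frac{1}{k})\cdot g(\OPT)$, which cannot be absorbed into a fixed $\epsilon$ and so misses the stated constant. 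The fix is simply to reweight $d$ in the potential; everything else in your plan (approximate local search with slack $\delta$, polynomial convergence, approximate evaluation of $F$) is standard and matches the paper.
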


The above result implies a PTAS for the case of $f$ being linear. This greatly improves upon the known  $1/2$-approximation~\cite{bhattacharya2011consideration,borodin2012max} if the distances are of negative type. If  $k$ is large enough, the result yields an approximation factor of $1-\frac{c}{e}-\epsilon$, which is known to be optimal even for the special case of maximizing a monotone submodular function with curvature $c$ over a matroid constraint~\cite{sviridenko2015optimal}.

\subsection*{Further related results and implications}

A further approach to deal with diversity maximization in large data sets is the computation of so-called \emph{core-sets}. A core-set is a small subset of the data such that an optimal solution to certain optimization problems, when applied only on the elements of the core-set, is close to the global optimal solution. Core-sets and the generalized notion of \emph{composable core-sets} have recently received considerable attention in the context of diversity maximization, and allow for transforming sequential algorithms into algorithms that run in MapReduce and Streaming models. We refer the interested reader to~\cite{indyk2014composable} and~\cite{ceccarello2016mapreduce} and references therein. We only mention the following direct consequence of the results in~\cite{ceccarello2016mapreduce} and Theorem~\ref{thm:PTASmat}. 

\begin{corollary}
Consider $\MSD_k$ on distances of negative type and doubling dimension $q$, such as Euclidean or Manhattan distances in $\mathbb{R}^q$. Then, for any $\epsilon >0$, there is a single-pass streaming algorithm that achieves an approximation guarantee of $1-\frac{5}{k}-\epsilon$, in space $O(\epsilon^{-q}k^2)$. 
\end{corollary}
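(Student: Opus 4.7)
The plan is to combine the composable core-set construction of Ceccarello et al.~\cite{ceccarello2016mapreduce} for $\MSD_k$ in doubling metrics with our local search procedure from Theorem~\ref{thm:PTASmat}. The workflow is: first produce, in a single pass, a small subset $S$ of the input that approximately preserves the optimum of $\MSD_k$; then run the local search of Theorem~\ref{thm:PTASmat} on $S$ at the end of the stream.

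Concretely, I would first invoke the streaming $\epsilon$-coreset result from~\cite{ceccarello2016mapreduce}. Since Euclidean and Manhattan distances (and, more generally, any distance of doubling dimension $q$) fall within their framework, their merge-and-reduce scheme produces, in a single pass, a subset $S$ of size $O(\epsilon^{-q} k)$ satisfying
\begin{equation*}
\max\{d(A) : A \subseteq S,\; |A| \leq k\} \;\geq\; (1-\epsilon)\,\OPT,
\end{equation*}
where $\OPT$ is the optimum of $\MSD_k$ on the full input. The working memory needed during the pass is $O(\epsilon^{-q} k^2)$, because the merge-and-reduce buffer simultaneously stores a constant number of coresets of size $O(\epsilon^{-q} k)$, each enlarged by a factor $k$ to accommodate the merging step.

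In the second stage, I would apply the local search algorithm of Theorem~\ref{thm:PTASmat} to the instance on ground set $S$ under the uniform matroid of rank $k$. Since $d$ is of negative type on $X$, its restriction to $S$ is trivially of negative type as well, so Theorem~\ref{thm:PTASmat} returns a feasible $A \subseteq S$ with $d(A) \geq (1 - 5/k)\max\{d(B) : B \subseteq S, |B| \leq k\}$. Chaining the two guarantees yields
\begin{equation*}
d(A) \;\geq\; \bigl(1-\tfrac{5}{k}\bigr)(1-\epsilon)\,\OPT \;\geq\; \bigl(1-\tfrac{5}{k}-\epsilon\bigr)\,\OPT,
\end{equation*}
within the claimed space bound.

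The routine step is the chaining of the two approximation ratios; the only real point of care is to make sure the cited coreset guarantee from~\cite{ceccarello2016mapreduce} is indeed stated (or can be read off) in the single-pass streaming regime with the space budget $O(\epsilon^{-q} k^2)$, rather than only in the MapReduce setting they emphasize. This is the step I would verify most carefully, but it is a direct consequence of applying their composable-coreset construction within the standard merge-and-reduce streaming framework.
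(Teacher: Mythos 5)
Your proposal is correct and is exactly the argument the paper intends (the paper offers no proof, presenting the corollary as a direct consequence of the coreset results of~\cite{ceccarello2016mapreduce} together with Theorem~\ref{thm:PTASmat}): compute a single-pass streaming coreset preserving the optimum up to a factor $1-\epsilon$, run the local search of Theorem~\ref{thm:PTASmat} on it, and chain the two ratios via $(1-\frac{5}{k})(1-\epsilon)\geq 1-\frac{5}{k}-\epsilon$. The one inaccuracy is your accounting of the space: for the remote-clique objective the coreset of~\cite{ceccarello2016mapreduce} itself has size $O(\epsilon^{-q}k^2)$ — their streaming algorithm maintains $O(\epsilon^{-q}k)$ cluster centers and retains up to $k$ delegate points per cluster — rather than an $O(\epsilon^{-q}k)$ coreset inflated by merge-and-reduce buffering, but this does not affect the claimed space bound or guarantee.
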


A related geometric measure of dispersion in $\mathbb{R}^q$ is the \emph{volume} of the selected data set. An optimal constant factor approximation  algorithm for this measure  was recently given by Nikolov~\cite{nikolov2015randomized}.    

\subsection*{Organization of the paper}

In Section~\ref{sec:matConstraints}, we present our results for \MSD with respect to negative type distances and subject to a matroid constraint. In particular, this section highlights the main strategy we employ to exploit the property of negative type distances.
Section~\ref{sec:matIntConstraints} presents a PTAS for matroid intersection constraints. Finally, Section~\ref{sec:submodObj} contains our results for generalized objective functions, being a sum of the dispersion and a nonnegative, monotone submodular function.

\section{A local-search based PTAS for matroid constraints}
\label{sec:matConstraints}

Recall that the dispersion of a set $A\in X$ is $d(A):=\sum_{\{a,a'\}\subseteq A} d(a,a')=\frac{1}{2} \sum_{a,a'\in A} d(a,a')$. And we define the following auxiliary function, which represents the total sum of distances between two sets: For sets $A,B\in X$, let
$$d(A,B):= \sum_{a\in A, b\in B} d(a,b).$$
Notice in particular that $d(A,A)=2d(A)$ for any $A\in X$.

Throughout this section, we consider a matroid $M=(X,\mathcal{I})$ of rank $k\in \mathbb{Z}_{\geq 2}$.
The algorithm we analyze in this section, highlighted below as Algorithm~\ref{alg:localSearchMat}, is a well-known canonical local search algorithm that starts with a basis $A$ and iteratively considers exchanges of a single element, always maintaining a basis. For brevity we use the shorthand $A-a+b$ for $(A\setminus \{a\})\cup \{b\}$, where $A$ is a set and $a,b$ are two elements.

\smallskip

{
\begin{algorithm}[H]

\For{$i=1 \ldots \ell$}{
  \If{$\exists$ pair $(a,b)\in A\times (X\setminus A)$ such that
$A-a+b\in \mathcal{I}$ and $d(A-a+b)>d(A)$}{
\smallskip
Find such a pair $(a,b)$ maximizing $d(A-a+b)$.\\
Set $A=A-a+b$.
}
}

\textbf{return} $A$.
\smallskip

\caption{Local search for matroids, starting with a basis $A$ and running for $\ell$ iterations.}
\label{alg:localSearchMat}
\end{algorithm}
}

\medskip

The following lemma provides a key inequality for negative type distances that we exploit. As an intuitive special case, the inequality says that for any two sets $A$ and $B$ of same cardinality, the dispersion within $A$ plus the dispersion within $B$ is no more than the sum of all distances between $A$ and $B$.

\begin{lemma}
\label{lem:negTypeIneq}
  Let $d:X\times X \rightarrow \mathbb{R}_{\geq 0}$ be a distance of negative type and $D \in \mathbb{R}_{≥0}^{X \times X}$ be the corresponding matrix, i.e., $D_{a,b}=d(a,b)$ for $a,b\in X$. For any two non-zero vectors $x,y \in \mathbb{R}^X_{≥0}$ one has 
  \begin{equation}
    \label{eq:negTypeIneqVec}
   \frac{\|y\|_1}{\|x\|_1} x^T Dx  + \frac{\|x\|_1}{\|y\|_1}y^T Dy
      \leq 2 x^T D y\enspace,
  \end{equation}
  and consequently for two non-empty sets $A,B \subseteq X$ 
  \begin{equation}
    \label{eq:negTypeIneqSet}
    \frac{|B|}{|A|} d(A) + \frac{|A|}{|B|} d(B) \leq d(A,B)\enspace . 
  \end{equation}
\end{lemma}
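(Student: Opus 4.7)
The plan is to reduce the inequality to the defining property of negative type distances, namely that $z^T D z \le 0$ whenever $\sum_i z_i = 0$, by constructing a single well-chosen vector $z$ as a linear combination of $x$ and $y$.

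First I would focus on the vector form~\eqref{eq:negTypeIneqVec}. The natural choice is to take $z = \|y\|_1\, x - \|x\|_1\, y$, since then $\sum_{i} z_i = \|y\|_1\|x\|_1 - \|x\|_1\|y\|_1 = 0$, so $z$ lies in the hyperplane on which $D$ is known to be negative semidefinite. Expanding $z^T D z \le 0$ gives
\begin{equation*}
\|y\|_1^2\, x^T D x \;-\; 2\,\|x\|_1\|y\|_1\, x^T D y \;+\; \|x\|_1^2\, y^T D y \;\le\; 0,
\end{equation*}
and dividing through by the positive quantity $\|x\|_1\|y\|_1$ (nonzero since both vectors are nonzero and nonnegative) yields~\eqref{eq:negTypeIneqVec} immediately.

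To derive the set version~\eqref{eq:negTypeIneqSet}, I would specialize to the indicator vectors $x = \mathbf{1}_A$ and $y = \mathbf{1}_B$. Then $\|x\|_1 = |A|$, $\|y\|_1 = |B|$; using $D_{a,a} = 0$ one has $x^T D x = \sum_{a,a' \in A} d(a,a') = 2 d(A)$, similarly $y^T D y = 2 d(B)$, and $x^T D y = d(A,B)$ by the definition of $d(A,B)$. Substituting into~\eqref{eq:negTypeIneqVec} and dividing by $2$ gives exactly~\eqref{eq:negTypeIneqSet}.

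There is no serious obstacle here; the only step that requires any thought is guessing the correct linear combination $\|y\|_1 x - \|x\|_1 y$, which is forced by the requirement that the coordinates of $z$ sum to zero while keeping the algebra symmetric between $A$ and $B$. Everything else is expansion and a division by a positive scalar.
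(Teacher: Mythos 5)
Your proof is correct and follows exactly the same route as the paper: taking $z = \|y\|_1 x - \|x\|_1 y$, applying the negative-type condition to $z$, dividing by $\|x\|_1\|y\|_1$, and then specializing to characteristic vectors (with the factor $2$ from $x^T D x = 2d(A)$) for the set version. No differences worth noting.
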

\begin{proof}
Let $z=\|y\|_1 x - \|x\|_1 y$. Notice that $\sum_{a\in X}z(a)=0$. Hence, by the fact that $D$ is a distance of negative type we have
\begin{align*}
0 \geq z^T D z = \|y\|_1^2 x^T D x - 2 \|x\|_1 \|y\|_1 x^T D y + \|x\|_1^2 y^T D y.
\end{align*}
Inequality~\eqref{eq:negTypeIneqVec} now follows by dividing both sides of the above inequality by $\|x\|_1\|y\|_1$ and rearranging terms.
Inequality~\eqref{eq:negTypeIneqSet} follows from~\eqref{eq:negTypeIneqVec} by setting $x=\chi^A$ and $y=\chi^B$ to the characteristic vectors of $A$ and $B$, respectively, and dividing both the left-hand side and right-hand side by $2$.
\end{proof}

To analyze Algorithm~\ref{alg:localSearchMat}, we will consider a pairing of the elements of $A$ with the elements of an optimal solution $\OPT$ to the problem. It is well known that for any two bases $A,B$ of a matroid, there exists a pairing that can be used for exchanges in $A$. More precisely, removing any element of $A$ and adding its paired counterpart in $B$ will again lead to a basis. We denote such a pairing, which can be formalized as a bijection $\pi: A \rightarrow B$ and whose properties are stated in the lemma below, a \emph{Brualdi bijection}.

\begin{lemma}[Brualdi~\cite{brualdi1969comments}]\label{lem:brualdi}
For any two independent sets $A,B\in \mathcal{I}$ of equal cardinality, there is a bijection $\pi:A\rightarrow B$ such that for any $a\in A$, we have $A-a+\pi(b)\in \mathcal{I}$. In particular, such a bijection satisfies that it is the identity mapping on $A\cap B$.
\end{lemma}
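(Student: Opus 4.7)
The plan is to prove the lemma via Hall's marriage theorem applied to a carefully chosen bipartite exchange graph between $A$ and $B$, using fundamental circuits as the key matroid-theoretic ingredient. Once I have the perfect matching, extending it by the identity map on $A \cap B$ will yield the desired bijection $\pi$.

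First, I would reduce to the case where $A$ and $B$ are \emph{disjoint bases} of an auxiliary matroid. Truncating $M$ to rank $|A|$ makes both $A$ and $B$ bases, and then passing to the contraction $M' = M/(A \cap B)$ restricted to $X \setminus (A\cap B)$ leaves $A \setminus B$ and $B \setminus A$ as disjoint bases of $M'$. Any exchange $A - a + b \in \mathcal{I}(M')$ for $a \in A\setminus B$, $b \in B\setminus B$ lifts to $A - a + b \in \mathcal{I}(M)$, so it suffices to construct a bijection on $A\setminus B \to B \setminus A$ with the exchange property in $M'$, and then pad it by the identity on $A\cap B$.

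Next, I would build the bipartite graph $G$ on vertex set $A \cup B$ (relabelling to the reduced disjoint-bases setting) whose edges are $\{a,b\} \in A \times B$ with $A - a + b$ a basis. The goal is to show $G$ has a perfect matching. The key combinatorial fact I would invoke is the uniqueness of fundamental circuits: for each $b \in B$, the set $A + b$ contains a unique circuit $C(A,b)$, and $\{a \in A : A-a+b \in \mathcal{I}\}$ equals $C(A,b) \setminus \{b\}$. Thus the neighborhood of $b$ in $G$ is exactly $C(A,b)\setminus\{b\}$.

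To verify Hall's condition, consider any $S \subseteq B$ with neighborhood $N(S) \subseteq A$ in $G$. For every $b \in S$ I have $C(A,b)\setminus\{b\} \subseteq N(S)$, and since $C(A,b)$ is a circuit, $b \in \mathrm{span}(C(A,b)\setminus\{b\}) \subseteq \mathrm{span}(N(S))$. Hence $S \subseteq \mathrm{span}(N(S))$, so
\begin{equation*}
|N(S)| = r(N(S)) = r(N(S) \cup S) \geq r(S) = |S|,
\end{equation*}
where I use that $N(S) \subseteq A$ and $S \subseteq B$ are independent. Hall's theorem then provides the perfect matching, which gives the bijection. The main obstacle is not technical difficulty but noticing the right quantity to bound: once one sees that $N(S)$ spans $S$ through fundamental circuits, the Hall condition drops out in one line from the rank/span machinery; a direct or inductive construction of $\pi$ would be substantially messier.
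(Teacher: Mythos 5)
Your proof is correct. Note that the paper does not prove this lemma at all --- it is quoted as a black-box result of Brualdi~\cite{brualdi1969comments} --- so there is no in-paper argument to compare against; what you give is the standard textbook derivation. The reduction chain (truncate to rank $|A|$ so that both sets become bases, contract $A\cap B$ to make them disjoint, pad the resulting bijection by the identity) is sound, since an exchange of size $|A|$ is independent in the truncation iff it is independent in $M$, and independence in $M/(A\cap B)$ of $(A\setminus B)-a+b$ is exactly independence of $A-a+b$ in $M$. The Hall verification is also right: each $b\in B$ lies in the span of $C(A,b)\setminus\{b\}\subseteq N(S)$, so $|N(S)|=r(N(S))=r(N(S)\cup S)\geq r(S)=|S|$, using that $N(S)$ and $S$ are independent. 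Two cosmetic slips worth fixing: ``$b\in B\setminus B$'' should read $b\in B\setminus A$, and you should state explicitly that after truncation $A$ is a basis, so $A+b$ is dependent for every $b\notin A$ and the fundamental circuit $C(A,b)$ indeed exists (this is precisely what the truncation buys you). Neither affects correctness.
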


To show that Algorithm~\ref{alg:localSearchMat} makes sufficient progress as long as $d(A)$ is much smaller than $d(\OPT)$, we consider a Brualdi bijection between $A$ and $\OPT$. As our analysis later shows, the distances between the pairs of a Brualdi bijection will be an error term that we have to bound. This is done by the following lemma.

\begin{lemma}\label{lem:cheapMatching}
For any two sets $A,B\subseteq X$ of equal cardinality $k$, and any bijection $\pi:A \rightarrow B$,
\begin{equation*}
\sum_{a\in A} d(a,\pi(a))\leq \frac{2}{k}d(A,B).
\end{equation*}
\end{lemma}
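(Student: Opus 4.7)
My plan is to establish a ``four-point'' companion to Lemma~\ref{lem:negTypeIneq}, sum it over all pairs of elements of $A$, and then use~\eqref{eq:negTypeIneqSet} as a final clean-up step.

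The first step will be to derive, for any two distinct elements $a, a' \in A$ with images $b = \pi(a)$ and $b' = \pi(a')$, the pairwise inequality
\begin{equation*}
2 \bigl( d(a, \pi(a)) + d(a', \pi(a')) \bigr) \leq d(\{a, a'\}, \{\pi(a), \pi(a')\}) + d(a, a') + d(\pi(a), \pi(a')).
\end{equation*}
To get this I will apply~\eqref{eq:negTypeIneqVec} to the nonnegative vectors $x = \chi^{\{a\}} + \chi^{\{b\}}$ and $y = \chi^{\{a'\}} + \chi^{\{b'\}}$, each of $\ell_1$-norm~$2$. A direct calculation yields $x^T D x = 2 d(a, b)$, $y^T D y = 2 d(a', b')$, and $x^T D y = d(a, a') + d(a, b') + d(a', b) + d(b, b')$, and rearranging~\eqref{eq:negTypeIneqVec} gives the claim. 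These formulas remain valid even when some of $a, a', b, b'$ coincide (for instance when $\pi$ fixes an element of $A \cap B$, in which case $x = 2\chi^{\{a\}}$), because the multiplicity factors cancel against the vanishing diagonal entries $d(u,u) = 0$.

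The main step is then to sum this pairwise bound over all $\binom{k}{2}$ unordered pairs $\{a, a'\} \subseteq A$. On the left, each matched distance $d(a, \pi(a))$ appears in $k - 1$ pairs, producing $2(k-1) \sum_{a \in A} d(a, \pi(a))$. On the right, the bipartite sum $\sum_{\{a, a'\}} d(\{a, a'\}, \{\pi(a), \pi(a')\})$ splits into a matched contribution of $(k-1) \sum_a d(a, \pi(a))$ and a crossed contribution $\sum_{a \neq a' \in A} d(a, \pi(a'))$; since $d(A, B) = \sum_{a, a' \in A} d(a, \pi(a'))$, the crossed part equals $d(A, B) - \sum_a d(a, \pi(a))$. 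The two remaining edge sums evaluate exactly to $d(A)$ and $d(B)$. Collecting everything and simplifying the coefficient $2(k-1) - (k-2) = k$ yields
\begin{equation*}
k \sum_{a \in A} d(a, \pi(a)) \leq d(A, B) + d(A) + d(B).
\end{equation*}

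The final step is to absorb the extra $d(A) + d(B)$ into $d(A, B)$: applying~\eqref{eq:negTypeIneqSet} with $|A| = |B| = k$ gives $d(A) + d(B) \leq d(A, B)$, and substitution produces the desired $\sum_a d(a, \pi(a)) \leq \tfrac{2}{k} d(A, B)$. The only delicate point I anticipate is the combinatorial bookkeeping in the summing step---correctly distinguishing matched from crossed contributions and counting multiplicities when $A$ and $B$ overlap---but the multiset-characteristic-vector framing makes all coincidence cases uniform, so no special-case analysis is required.
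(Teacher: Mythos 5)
Your proof is correct, and it takes a recognizably different route from the paper's. The paper applies inequality~\eqref{eq:negTypeIneqSet} once per element $a\in A$, to the pair of sets $A$ and $\{a,\pi(a)\}$, obtaining $d(A,a)+d(A,\pi(a))\geq \frac{2}{k}d(A)+\frac{k}{2}d(a,\pi(a))$; summing over $a$ makes the $d(A,A)$ and $2d(A)$ terms cancel exactly, and the bound drops out with no further work. You instead extract from~\eqref{eq:negTypeIneqVec} the four-point (quadrilateral) inequality $d(a,b)+d(a',b')\leq d(a,a')+d(a,b')+d(a',b)+d(b,b')$ for matched edges $(a,b)$ and $(a',b')$, sum it over all $\binom{k}{2}$ pairs of matched edges, and then need one extra invocation of~\eqref{eq:negTypeIneqSet} (namely $d(A)+d(B)\leq d(A,B)$) to absorb the leftover $d(A)+d(B)$. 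Your bookkeeping checks out: the coefficient of $\sum_a d(a,\pi(a))$ is $2(k-1)-(k-2)=k$, the crossed terms assemble to $d(A,B)-\sum_a d(a,\pi(a))$ because $\pi$ is a bijection, and the multiset-vector framing ($x=\chi^{\{a\}}+\chi^{\{b\}}$, possibly $2\chi^{\{a\}}$) correctly handles fixed points of $\pi$ since $d(u,u)=0$. What each approach buys: the paper's is shorter and its error terms cancel for free, while yours isolates the purely local quadrilateral consequence of negative type, which is a reusable and arguably more transparent combinatorial identity; the trade-off is that you still need the global inequality~\eqref{eq:negTypeIneqSet} on $A$ and $B$ at the end, so neither hypothesis is weakened. (The only degenerate case your pair-summation does not literally cover is $k=1$, where there are no pairs, but the claim is trivial there since $d\geq 0$, and the paper works with $k\geq 2$ throughout.)
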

\begin{proof}
For any $a\in A$ with $a\neq \pi(a)$, Lemma~\ref{lem:negTypeIneq} implies
\begin{equation*}
d(A,a)+d(A,\pi(a))=d(A,\{a,\pi(a)\})\geq \frac{2}{k}d(A)
+\frac{k}{2}d(a,\pi(a)).
\end{equation*}
Notice that the above inequality is also true if $a=\pi(a)$, in which case the inequality reduces to $d(A,a) \geq \frac{1}{k}d(A)$, which is again a direct consequence of Lemma~\ref{lem:negTypeIneq}.
Summing these inequalities over $A$ gives
\begin{equation*}
d(A,A)+d(A,B)\geq 2d(A)+\frac{k}{2}\sum_{a\in A} d(a,\pi(a))\enspace .
\end{equation*}
The terms $d(A,A)$ and $2d(A)$ cancel out, and the claim follows.
\end{proof}

The following lemma shows that a locally optimal solution with respect to the considered exchange steps is a $(1-\frac{4}{k})$-approximation, without going into the details of how fast we will approach a local optimum.

\begin{lemma}\label{lem:matAverageImp}
Let $A$ and $B$ be bases of $M$, and let $\pi:A\rightarrow B$ be a bijection satisfying $\pi(a) = a$ for $a\in A\cap B$. Then
\begin{align}
\sum_{a\in A}\left( d(A-a+\pi(a)) - d(A)\right)
  &\geq \left(1-\frac{2}{k}\right) d(B) - \left(1+\frac{2}{k}\right)d(A)\enspace,\label{eq:matAverageImpDetailed}\\
\intertext{which, if $d(B) \geq d(A)$, implies}
\sum_{a\in A}\left( d(A-a+\pi(a)) - d(A)\right)
  &\geq \left(1-\frac{4}{k}\right) d(B) - d(A)\enspace.
\label{eq:matAverageImpOpt}
\end{align}

\end{lemma}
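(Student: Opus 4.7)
The plan is to expand $d(A-a+\pi(a))-d(A)$ as a simple per-element formula, sum over $a\in A$, and then control the resulting expression using both inequalities proved earlier (Lemma~\ref{lem:negTypeIneq} and Lemma~\ref{lem:cheapMatching}). The second inequality \eqref{eq:matAverageImpOpt} follows immediately from \eqref{eq:matAverageImpDetailed}, since under the hypothesis $d(B)\geq d(A)$ one has $d(A)+d(B)\leq 2d(B)$, so $-\tfrac{2}{k}(d(A)+d(B))\geq -\tfrac{4}{k}d(B)$, and rearranging \eqref{eq:matAverageImpDetailed} in the form $d(B)-d(A)-\tfrac{2}{k}(d(A)+d(B))$ yields the claim.

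For \eqref{eq:matAverageImpDetailed}, I would first observe that the assumption $\pi(a)=a$ on $A\cap B$ forces $\pi(a)\notin A\setminus\{a\}$ for every $a\in A$ (else $\pi(a)\in A\cap B$ and $\pi$ being a bijection equal to the identity on $A\cap B$ would contradict $\pi(a)\neq a$). Hence $A-a+\pi(a)$ always has cardinality $k$, and the standard telescoping for the dispersion yields
\begin{equation*}
d(A-a+\pi(a))-d(A)\;=\;-d(a,A)+d(\pi(a),A)-d(a,\pi(a)),
\end{equation*}
which is also trivially true when $\pi(a)=a$ since all three right-hand side terms either cancel or vanish.

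Summing over $a\in A$ and using that $\pi$ is a bijection $A\to B$, the three sums identify respectively with $-d(A,A)=-2d(A)$, $d(B,A)=d(A,B)$, and $\sum_{a\in A}d(a,\pi(a))$. The last sum is upper bounded by $\tfrac{2}{k}d(A,B)$ by Lemma~\ref{lem:cheapMatching}, giving
\begin{equation*}
\sum_{a\in A}\bigl(d(A-a+\pi(a))-d(A)\bigr)\;\geq\;-2\,d(A)+\Bigl(1-\tfrac{2}{k}\Bigr)d(A,B).
\end{equation*}
Finally, I would apply inequality~\eqref{eq:negTypeIneqSet} of Lemma~\ref{lem:negTypeIneq} with $|A|=|B|=k$, which gives $d(A)+d(B)\leq d(A,B)$. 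Substituting this into the previous line produces exactly the desired bound $\bigl(1-\tfrac{2}{k}\bigr)d(B)-\bigl(1+\tfrac{2}{k}\bigr)d(A)$.

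The main conceptual obstacle is not a calculation but the careful setup of the per-swap identity in the edge case $\pi(a)\in A$; once the identity-on-$A\cap B$ hypothesis is used to rule this out (except when $\pi(a)=a$), the rest is just bookkeeping plus two applications of inequalities already established. The negative-type property enters only through these two earlier lemmas, making this step the cleanest use of the hypothesis in the whole analysis.
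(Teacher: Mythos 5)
Your proof is correct and follows essentially the same route as the paper's: expand each swap via the identity $d(A-a+\pi(a))-d(A)=d(A,\pi(a))-d(a,\pi(a))-d(a,A)$, sum to get $d(A,B)-d(A,A)-\sum_a d(a,\pi(a))$, then apply Lemma~\ref{lem:cheapMatching} followed by inequality~\eqref{eq:negTypeIneqSet}. The only difference is that you spell out explicitly why $\pi(a)\notin A\setminus\{a\}$, which the paper handles implicitly through the condition $b\in X\setminus(A-a)$ in~\eqref{eq:expandUpdate}.
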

\begin{proof}
We first observe that
\begin{equation}\label{eq:expandUpdate}
d(A-a+b) = d(A) + d(A,b) - d(a,b) - d(a,A)
\qquad \forall a\in A, b\in X\setminus (A-a)\enspace.
\end{equation}
The above equation clearly holds if $a=b$, and for $a\neq b$ it can easily be checked by observing that the distance of any pair is counted the same number of times in the right-hand side and left-hand side of the equation (see Figure~\ref{fig:dAmapb}). 

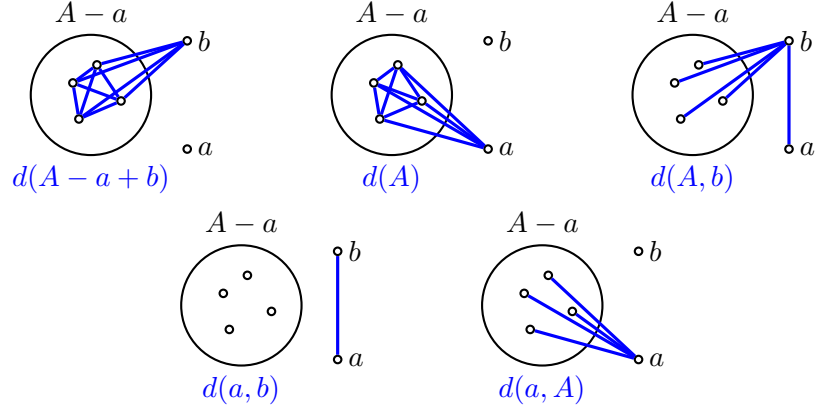
\begin{figure}[h]
\begin{center}
\begin{tikzpicture}[scale=0.8]

\def\dist{5cm}
\def\hdist{3.5cm}

\tikzstyle{ns}=[draw,circle,inner sep=1]
\tikzstyle{es}=[blue, very thick]

\newcommand\basegraph{
\begin{scope}[thick]

\begin{scope}[every node/.style=ns]
\node (v1) at (0,0) {};
\node (v2) at (0.7,0.3) {};
\node (v3) at (-0.1,0.6) {};
\node (v4) at (0.3,0.9) {};

\node (a) at (1.8,-0.5) {};
\node (b) at (1.8,1.3) {};
\end{scope}

\draw (0.2,0.4) circle (1);

\node at (0.2,1.4) [above] {$A-a$};
\node at (a) [right] {$a$};
\node at (b) [right] {$b$};

\end{scope}
}


\begin{scope}
\basegraph
\begin{scope}[es]
\draw (v1) -- (v2);
\draw (v1) -- (v3);
\draw (v1) -- (v4);
\draw (v2) -- (v3);
\draw (v2) -- (v4);
\draw (v3) -- (v4);

\foreach \i in {1,...,4} {
\draw (b) -- (v\i);
}
\end{scope}

\node[blue] at (0.2,-1) {$d(A-a+b)$};

\end{scope}

\begin{scope}[xshift=\dist]
\basegraph
\begin{scope}[es]
\draw (v1) -- (v2);
\draw (v1) -- (v3);
\draw (v1) -- (v4);
\draw (v2) -- (v3);
\draw (v2) -- (v4);
\draw (v3) -- (v4);

\foreach \i in {1,...,4} {
\draw (a) -- (v\i);
}
\end{scope}

\node[blue] at (0.2,-1) {$d(A)$};

\end{scope}

\begin{scope}[xshift=2*\dist]
\basegraph
\begin{scope}[es]

\foreach \i in {1,...,4} {
\draw (b) -- (v\i);
}
\draw (b) -- (a);
\end{scope}

\node[blue] at (0.2,-1) {$d(A,b)$};

\end{scope}

\begin{scope}[shift={(0.5*\dist,-\hdist)}]
\basegraph
\begin{scope}[es]
\draw (a) -- (b);
\end{scope}

\node[blue] at (0.2,-1) {$d(a,b)$};

\end{scope}

\begin{scope}[shift={(1.5*\dist,-\hdist)}]
\basegraph
\begin{scope}[es]
\foreach \i in {1,...,4} {
\draw (a) -- (v\i);
}
\end{scope}

\node[blue] at (0.2,-1) {$d(a,A)$};

\end{scope}

\end{tikzpicture}
\end{center}
\caption{Graphical highlighting of the edges counted in the different terms of~\eqref{eq:expandUpdate}, for $a\neq b$. One can easily observe that each edge is counted the same number of times in the left-hand side and right-hand side of~\eqref{eq:expandUpdate}.}\label{fig:dAmapb}
\end{figure}

We finally obtain
\begin{align*}
\sum_{a\in A} \left( d(A-a+\pi(a)) - d(A)\right)
  &= \sum_{a\in A} \left( d(A,\pi(a)) - d(a,\pi(a)) - d(a,A) \right)
     && \text{(by~\eqref{eq:expandUpdate})}\\
  &= d(A,B) - d(A,A) - \sum_{a\in A}d(a,\pi(a))\\
  &\geq \left(1-\frac{2}{k}\right) d(A,B) - d(A,A) 
     && \text{(by Lemma~\ref{lem:cheapMatching})}\\
  &\geq \left(1-\frac{2}{k}\right) (d(A) + d(B)) - 2d(A)
     && \text{(by~\eqref{eq:negTypeIneqSet})}\\
  &= \left(1-\frac{2}{k}\right) d(B) - \left(1+\frac{2}{k}\right) d(A).
%
\end{align*}
Inequality~\eqref{eq:matAverageImpOpt}, valid for $d(B) \geq d(A)$, now easily follows by replacing in the right-hand side the term $-\frac{2}{k} d(A)$ by $-\frac{2}{k} d(B)$.
\end{proof}

The next theorem is a mostly standard argument showing exponentially fast convergence of the local search algorithm.

\begin{theorem}\label{thm:convLSMat}
Let $A$ be any basis of $M$.
Running Algorithm~\ref{alg:localSearchMat} for $\ell\in \mathbb{Z}_{\geq 0}$ iterations, a basis $A_\ell$ is returned such that
\begin{equation*}
d(A_\ell) \geq \left(
1 -
\left(1-\frac{1}{k}\right)^\ell \right)
\left(1-\frac{4}{k}\right)\cdot d(\OPT) \enspace .
\end{equation*}
\end{theorem}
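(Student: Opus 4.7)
The plan is to use Lemma~\ref{lem:matAverageImp} together with the fact that the algorithm picks the best single-element exchange at each step, in order to set up a standard one-step recurrence and then iterate it.

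Let $A_0, A_1, \dots, A_\ell$ denote the sequence of bases produced by Algorithm~\ref{alg:localSearchMat}, where $A_0 = A$ is the starting basis. Fix an iteration index $i$, and let $\OPT$ be an optimal basis. First I would invoke Lemma~\ref{lem:brualdi} to fix a Brualdi bijection $\pi : A_i \to \OPT$, so that every swap $A_i - a + \pi(a)$ remains in $\mathcal{I}$ and hence is a feasible candidate for the algorithm. Next I would split on whether $d(\OPT) \geq d(A_i)$. If not, the desired bound already holds trivially since $(1-(1-\tfrac1k)^\ell)(1-\tfrac4k) \leq 1$, and this property is preserved for all later iterations because $d(A_j)$ is nondecreasing. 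If $d(\OPT) \geq d(A_i)$, Lemma~\ref{lem:matAverageImp}~(inequality~\eqref{eq:matAverageImpOpt}) gives
\begin{equation*}
\sum_{a \in A_i} \bigl( d(A_i - a + \pi(a)) - d(A_i) \bigr) \;\geq\; \Bigl(1 - \tfrac{4}{k}\Bigr) d(\OPT) - d(A_i).
\end{equation*}

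Since Algorithm~\ref{alg:localSearchMat} picks the pair $(a,b)$ that maximizes $d(A_i - a + b)$ over all feasible swaps (and in particular over the $k$ Brualdi swaps), the best single-swap improvement is at least the average of these $k$ improvements. Therefore
\begin{equation*}
d(A_{i+1}) - d(A_i) \;\geq\; \tfrac{1}{k}\Bigl[\bigl(1 - \tfrac{4}{k}\bigr) d(\OPT) - d(A_i)\Bigr],
\end{equation*}
where, if the maximum over Brualdi swaps is negative, the algorithm keeps $A_{i+1} = A_i$, which only improves things because in that case the right-hand side is non-positive (and we are already above the target $(1-\tfrac4k)d(\OPT)$). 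Writing $\alpha := (1-\tfrac4k)d(\OPT)$ and rearranging yields
\begin{equation*}
\alpha - d(A_{i+1}) \;\leq\; \Bigl(1 - \tfrac{1}{k}\Bigr) \bigl( \alpha - d(A_i) \bigr).
\end{equation*}

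Iterating this one-step contraction from $i=0$ to $i=\ell-1$ and using $d(A_0) \geq 0$ gives $\alpha - d(A_\ell) \leq (1 - \tfrac{1}{k})^\ell \alpha$, which is exactly the claim. The only subtlety to watch is the case where no improving swap exists at some step: this forces $A_{i+1} = A_i$, and by the averaging argument it implies $d(A_i) \geq \alpha$, so the invariant $\alpha - d(A_j) \leq (1-\tfrac1k)^j \alpha$ continues to hold vacuously for all $j \geq i$. Once this technicality is addressed, the derivation reduces to the standard geometric-decay recurrence and requires no further computation.
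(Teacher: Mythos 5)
Your proposal is correct and follows essentially the same route as the paper: fix a Brualdi bijection from $A_i$ to $\OPT$, apply inequality~\eqref{eq:matAverageImpOpt} of Lemma~\ref{lem:matAverageImp}, use the fact that the best feasible swap beats the average over the $k$ Brualdi swaps to get the one-step recurrence, and iterate the geometric contraction. The extra case analysis you include is harmless but unnecessary (e.g., $d(\OPT) \geq d(A_i)$ always holds since $A_i$ is feasible and $\OPT$ is optimal), and your handling of the no-improving-swap case is a correct, slightly more explicit version of what the paper leaves implicit.
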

\begin{proof}
Let $A_0 = A$ be the starting basis and we denote by $A_i$ for $i\in \{0,\ldots, \ell\}$ the basis obtained after $i$ iterations of the local search algorithm. Let $i\in \{0,\ldots, \ell-1\}$, and we consider the improvement from $A_i$ to $A_{i+1}$. Let $\pi: A_i \rightarrow \OPT$ be a Brualdi bijection. By inequality~\eqref{eq:matAverageImpOpt} in Lemma~\ref{lem:matAverageImp}, the average improvement in the dispersion of $A_i$ over the exchanges corresponding to $\pi$ is at least
\begin{equation*}
\frac{1}{k} \sum_{a\in A_i}\left(d(A_i-a+\pi(a)) - d(A_i) \right)
   \geq \frac{1}{k}\left(\left(1-\frac{4}{k}\right) d(\OPT) - d(A_i)\right)
\enspace.
\end{equation*}
Since our local search algorithm does an exchange that maximizes $d(A_{i+1}) - d(A_i)$, we have
\begin{equation*}
d(A_{i+1}) - d(A_i)
   \geq \frac{1}{k}\left(\left(1-\frac{4}{k}\right) d(\OPT) - d(A_i)\right)
      \enspace ,
\end{equation*}
which, by regrouping terms, leads to
\begin{equation*}
\left(1 - \frac{4}{k}\right)d(\OPT) - d(A_{i+1})
  \leq \left(1-\frac{1}{k}\right) \left(
    \left(1-\frac{4}{k}\right) d(\OPT) - d(A_i)
  \right)
  \quad \forall i\in \{0,\ldots, \ell-1\}
  \enspace .
\end{equation*}
The above family of inequalities implies
\begin{align*}
\left(1-\frac{4}{k}\right) d(\OPT) - d(A_{\ell})
&\leq
\left(1 - \frac{1}{k}\right)^\ell \left(
\left(1-\frac{4}{k}\right) d(\OPT) - d(A_0)
\right)\\
&\leq \left(1-\frac{1}{k}\right)^\ell \left(1-\frac{4}{k}\right) d(\OPT),
\end{align*}
which shows the theorem.
\end{proof}

Putting all ingredients together, we get our main result for \MSD with negative type distances and subject to a matroid constraint, which shows Theorem~\ref{thm:PTASmat}.

\begin{theorem}
It suffices to run Algorithm~\ref{alg:localSearchMat} for $O(k \log k)$ iterations starting with an arbitrary basis to obtain a $(1-\frac{5}{k})$-approximation for \MSD with respect to a negative type distance and subject to a matroid constraint of rank $k$.
Moreover, this algorithm can be implemented to run in $O(n k^2 \log k)$ time, when counting distance evaluations and calls to the independence oracle as unit time.
\end{theorem}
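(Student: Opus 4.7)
The plan is to combine Theorem~\ref{thm:convLSMat} (which already delivers the per-iteration convergence estimate) with (a) a careful choice of the number of iterations $\ell$ to get the claimed $(1-5/k)$-factor, and (b) a data-structure argument that each iteration can be executed in $O(nk)$ time.

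For the approximation guarantee, I would pick $\ell = \lceil k \ln k \rceil$ and invoke the standard bound $(1-1/k)^\ell \le e^{-\ell/k} \le 1/k$. Plugging this into the conclusion of Theorem~\ref{thm:convLSMat} and expanding gives
\begin{equation*}
d(A_\ell) \geq \left(1 - (1-\tfrac{1}{k})^\ell\right)\left(1-\tfrac{4}{k}\right) d(\OPT)
   = \left(1-\tfrac{4}{k}\right)d(\OPT) - (1-\tfrac{1}{k})^\ell\left(1-\tfrac{4}{k}\right) d(\OPT)
   \geq \left(1-\tfrac{5}{k}\right) d(\OPT),
\end{equation*}
since $(1-1/k)^\ell(1-4/k) \le (1-1/k)^\ell \le 1/k$. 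This proves the first sentence of the theorem; an initial basis can be produced in a single pass over $X$ via the matroid greedy algorithm, which fits comfortably inside the overall time budget.

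For the running time, the crucial observation is the exchange identity~\eqref{eq:expandUpdate}: $d(A-a+b) - d(A) = d(A,b) - d(a,b) - d(a,A)$. I would precompute, at the start of each iteration, the two arrays $\alpha_a := d(a,A)$ for $a\in A$ (costing $O(k^2)$ distance evaluations) and $\beta_b := d(A,b)$ for $b\in X\setminus A$ (costing $O(nk)$ evaluations). Then every candidate exchange $(a,b)$ can be scored as $\beta_b - d(a,b) - \alpha_a$ in constant time plus one independence-oracle call, and there are $O(nk)$ such candidates. Thus each iteration runs in $O(nk)$ time, and $O(k\log k)$ iterations multiply out to $O(nk^2\log k)$ overall.

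I do not expect any real obstacle here: the convergence part is an immediate corollary of Theorem~\ref{thm:convLSMat}, and the implementation part is bookkeeping around~\eqref{eq:expandUpdate}. The only mildly subtle point is to make sure one is not tempted to maintain $\alpha_a$ and $\beta_b$ incrementally across iterations (which is possible but unnecessary); recomputing from scratch each iteration already meets the claimed bound because the dominant cost $O(nk)$ per iteration is exactly what recomputing $\{\beta_b\}_{b\in X\setminus A}$ demands.
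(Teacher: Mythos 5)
Your proposal is correct and follows essentially the same route as the paper: choose $\ell = O(k\log k)$ so that $(1-1/k)^\ell \le 1/k$ and plug into Theorem~\ref{thm:convLSMat}, then use the exchange identity~\eqref{eq:expandUpdate} with precomputed row sums $d(a,A)$ and $d(A,b)$ to score each of the $O(nk)$ candidate swaps in constant time, giving $O(nk)$ per iteration. The only cosmetic difference is that the paper computes $d(A,b)$ lazily inside the loop over $b$ rather than as a precomputed array, which yields the same $O(nk)$ bound.
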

\begin{proof}
By Theorem~\ref{thm:convLSMat}, to obtain a $(1-\frac{5}{k})$-approximation, it suffices to choose the number of iterations $\ell$ of Algorithm~\ref{alg:localSearchMat} such that $(1-\frac{1}{k})^\ell \leq \frac{1}{k}$, which can be achieved by setting $\ell= O(k \log k)$.

To complete the proof, it remains to show that each iteration can be performed in $O(nk)$ time. One way to implement our local search algorithm to get the $O(nk)$ running time per iteration is as follows. At the beginning of each iteration, with current set $A$, we compute $d(A)$, and also $d(A,a)$ for each $a\in A$. For this we need $O(k^2)$ distance evaluations; we recall that $O(k^2)=O(n k)$ because $k\leq n$. We then go through the elements $b\in X\setminus A$, and for each $b\in X\setminus A$, we consider all sets $A-a+b$ for $a\in A, a \neq b$. Since $|X\setminus A| = O(n)$, it suffices to show that for a fixed $b\in X\setminus A$, we can compute the best exchange step involving $b$ in $O(k)$ time. For a fixed $b$, we first compute $d(A,b)$ in $O(k)$ time. Then, for each $a\in A,a\neq b$ we first call the independence oracle to determine whether $A-a+b\in \mathcal{I}$, and if so we compute $d(A-a+b)$ using~\eqref{eq:expandUpdate}, i.e.,
\begin{equation*}
d(A-a+b) = d(A) + d(A,b) - d(a,b) - d(a,A)\enspace .
\end{equation*}
The only quantity on the right-hand side that we did not compute so far is $d(a,b)$, which is obtained by a single distance evaluation. Hence, computing $d(A-a+b)$ for all $a\in A,a\neq b$ for which $A-a+b\in \mathcal{I}$, takes $O(k)$ time as desired, when counting distance evaluations and calls to the independence oracle as unit time.
\end{proof}

\section{A PTAS for matroid intersection}
\label{sec:matIntConstraints}

Throughout this section, $M_1=(X,\mathcal{I}_1)$ and $M_2=(X,\mathcal{I}_2)$ are two matroids on the same ground set $X$, and $k$ is the maximum cardinality of a common independent set.
The algorithm we analyze is a natural local search algorithm considering exchanges up to a certain size. It is almost identical to a procedure suggested by Lee, Sviridenko and Vondr\'{a}k~\cite{lee_2010_submodular}, designed for maximizing a monotone submodular function subject to multiple matroid constraints. The only difference is that after each exchange step, we augment the current set $A$ to a (inclusion-wise) maximal set in $\mathcal{I}_1 \cap \mathcal{I}_2$, i.e., we replace $A$ by a maximal set $A'\in \mathcal{I}_1\cap \mathcal{I}_2$ that contains $A$. It is well-known that any maximal common independent set has cardinality at least $\frac{k}{2}$. This is a property we exploit in our analysis.

For better readability, we did not try to optimize constants and often use rather loose bounds for simplicity.

\medskip

{
\begin{algorithm}[H]

\For{$i=1 \ldots \ell$}{
  \If{$\exists$ pair $(S,T)\in 2^A\times 2^{X\setminus A}$ with 
     \begin{enumerate}[label=(\roman*), nosep, leftmargin=3em]
       \item $|S|\leq p, |T|\leq p-1$,
       \item $(A\setminus S) \cup T \in \mathcal{I}_1\cap \mathcal{I}_2$, and
       \item $d((A\setminus S)\cup T)>d(A)$,
     \end{enumerate}
}
{
\smallskip
Find such a pair $(S,T)$ maximizing $d((A\setminus S)\cup T)$.\\
\smallskip
Set $A=(A\setminus S) \cup T$.\\
\smallskip
Augment $A$ to a maximal element in $\mathcal{I}_1\cap \mathcal{I}_2$.
}
}

\textbf{return} $A$.
\smallskip

\caption{Local search for matroid intersection with exchange parameter $p\in \mathbb{Z}_{\geq 2}$, starting with $A\in \mathcal{I}_1 \cap \mathcal{I}_2$ and running for $\ell$ iterations.}
\label{alg:localSearchMatInt}
\end{algorithm}
}

\medskip

Our analysis heavily relies on recently developed exchange properties for matroid intersection. The following lemma was shown in~\cite{chekuri_2011_multibudgeted} building up on previous work~\cite{lee_2010_submodular,chekuri_2010_dependent}. We state a slightly simplified version of the lemma. The original lemma provided further properties on the sets $P_i$ and also guaranteed that such sets can be found efficiently. These are properties we do not need in our analysis. The set operator $\Delta$ denotes the symmetric difference, i.e., $A\Delta B = (A\setminus B) \cup (B \setminus A)$.

\begin{lemma}[{\cite[Lemma 3.3]{chekuri_2011_multibudgeted}}]
\label{lem:matIntExchange}
Let $p\in \mathbb{Z}_{\geq 2}$, and let $A, B \in \mathcal{I}_1\cap \mathcal{I}_2$ with $|A|=|B|$. Then there exists a family of nonempty sets $P_1, \dots, P_m\subseteq X$ with $|P_i\cap A|\leq p$ and $|P_i\cap B|\leq p-1$ for $i\in [m]:=\{1,\ldots m\}$, and coefficients $\lambda_i > 0$ such that
\begin{enumerate}[label=(\roman*),itemsep=0em]
\item $A\Delta P_i \in \mathcal{I}_1\cap \mathcal{I}_2$ \;$\forall i \in [m]$, and
\item $\sum_{i=1}^m \lambda_i \chi^{P_i} = \frac{p}{p-1} \chi^{A\setminus B} + \chi^{B\setminus A}$.
\end{enumerate}
\end{lemma}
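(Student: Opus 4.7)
The plan is to deploy the structural exchange theory for matroid intersection on the symmetric difference $A\triangle B$, and then to realize the identity in (iii) by a weighted averaging over short alternating sub-structures.

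First, I would set up an alternating digraph on $A\triangle B$: by the classical matroid intersection exchange machinery (going back to Brualdi and Krogdahl), there exist perfect bijections $\pi_1,\pi_2:A\setminus B\to B\setminus A$ such that $\pi_1$ lies in the $M_1$-exchange graph of $A$ and $\pi_2$ in the $M_2$-exchange graph. Gluing $\pi_1$ and $\pi_2^{-1}$ produces a digraph $H$ on $A\triangle B$ whose arcs alternate between $M_1$-type (from $A\setminus B$ to $B\setminus A$) and $M_2$-type (from $B\setminus A$ to $A\setminus B$); since each vertex of $H$ has in-degree and out-degree at most $1$, $H$ decomposes into vertex-disjoint alternating paths and cycles.

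The candidate sets $P_i$ would be vertex sets of contiguous sub-walks $W$ of $H$ that begin and end in $A\setminus B$ and span at most $2p-1$ vertices; each such $W$ then contains at most $p$ elements of $A\setminus B$ and at most $p-1$ of $B\setminus A$, giving (i), and a staircase-exchange argument along the alternating arcs of $W$ yields $A\triangle V(W)\in\mathcal{I}_1\cap\mathcal{I}_2$, giving (ii). To realize (iii), I would define a uniform sliding-window distribution on $H$: in each component long enough, assign weight $\tfrac{1}{p-1}$ to every sub-walk of vertex length exactly $2p-1$. A direct count shows that every element of $A\setminus B$ lies in $p$ such windows (contributing a total of $\tfrac{p}{p-1}$) and every element of $B\setminus A$ lies in $p-1$ of them (contributing $1$), matching the right-hand side exactly.

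The main obstacle is the treatment of components of $H$ that are shorter than $2p-1$ vertices, i.e.\ short cycles or short paths, where the sliding-window scheme breaks down: such components must be taken as single sets $P_i$ with carefully rescaled weights so that the per-coordinate identity in (iii) continues to hold exactly. Reconciling these boundary adjustments with the interior sliding-window decomposition, and verifying that the staircase exchange argument still produces common independent sets for these short sub-structures, is the delicate combinatorial heart of the proof.
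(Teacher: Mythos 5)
The paper does not prove this lemma at all: it is quoted verbatim (in a slightly simplified form) from Chekuri, Vondr\'ak, and Zenklusen \cite{chekuri_2011_multibudgeted}, and the only thing the paper itself proves in this neighbourhood is the follow-up Lemma~\ref{lem:matIntExchangeGen}, which removes the requirement $|A|=|B|$ by a lifting trick and invokes the present lemma as a black box. So there is no ``paper proof'' to match your proposal against; what you are attempting to reconstruct is the proof from the cited reference.

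Your high-level picture is the right one and is indeed the one used in \cite{chekuri_2011_multibudgeted} and its predecessor \cite{lee_2010_submodular}: build the alternating exchange digraph on $A\triangle B$ from per-matroid exchange bijections, decompose it, and realize the right-hand side of (ii) by a uniform sliding-window average over sub-chains of length $2p-1$; the window arithmetic you do (each $A\setminus B$ element in $p$ windows of weight $\tfrac{1}{p-1}$, each $B\setminus A$ element in $p-1$ of them) is exactly right. Two points, though, deserve attention. First, a small one: with $|A|=|B|$ and two full bijections $A\setminus B\to B\setminus A$, the resulting digraph is $1$-in-$1$-out at every vertex and therefore decomposes into directed \emph{cycles} only, not paths; paths appear in the unequal-cardinality setting, which is what Lemma~\ref{lem:matIntExchangeGen} handles separately.

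The genuine gap is the sentence ``a staircase-exchange argument along the alternating arcs of $W$ yields $A\triangle V(W)\in\mathcal{I}_1\cap\mathcal{I}_2$.'' This does \emph{not} follow from the single-swap feasibility $A-a+\pi_j(a)\in\mathcal{I}_j$: having each individual exchange arc feasible in $M_j$ says nothing, by itself, about exchanging several elements of $A$ for several elements of $B$ simultaneously. The tool that makes multi-element exchanges work is the Brualdi--Krogdahl unique perfect matching lemma: if $A\in\mathcal{I}_j$, $|B'|=|A|$, and the exchange bipartite graph of $M_j$ restricted to $(A\setminus B',\, B'\setminus A)$ admits a \emph{unique} perfect matching, then $B'\in\mathcal{I}_j$. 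The actual argument therefore has to (a) choose $\pi_1,\pi_2$ so that each is the unique matching on all of $A\triangle B$ (this part is standard), and (b) argue that the restriction of $\pi_j$ to each chosen sub-chain $W$ is still the unique matching on $(W\cap A,\, W\cap B)$ in $D_{M_j}(A)$ --- which is exactly where the combinatorial work lies, and where an arbitrary sliding window can fail without further structure. Your write-up treats (b) as automatic, and it is not. Relatedly, your proposed fix for components shorter than $2p-1$ --- take the whole component as one $P_i$ with a rescaled weight --- cannot produce identity (ii), since in a single set every element receives the same total weight, whereas $A\setminus B$ and $B\setminus A$ elements must end up with the distinct totals $\tfrac{p}{p-1}$ and $1$; one has to supplement with further sets (e.g.\ singletons $\{a\}\subseteq A\setminus B$, for which $A\triangle\{a\}=A-a$ is trivially common independent) and still establish feasibility of the whole-cycle exchange itself, which again needs the uniqueness argument.
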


It is not hard to see that the requirement $|A|=|B|$ in Lemma~\ref{lem:matIntExchange} is not necessary. This condition was important in the original (stronger) lemma presented in~\cite{chekuri_2011_multibudgeted} which contained further properties that relied on it. For completeness, we state the lemma without the equal cardinality requirement below and provide a short proof for it.

\begin{lemma}\label{lem:matIntExchangeGen}
Let $p\in \mathbb{Z}_{\geq 2}$, and let $A, B \in \mathcal{I}_1\cap \mathcal{I}_2$. Then there exists a family of nonempty sets $P_1, \dots, P_m\subseteq X$ with $|P_i\cap A|\leq p$ and $|P_i\cap B|\leq p-1$ for $i\in [m]$, and coefficients $\lambda_i > 0$ such that
\begin{enumerate}[label=(\roman*),itemsep=0em]
\item $A\Delta P_i \in \mathcal{I}_1\cap \mathcal{I}_2$ \;$\forall i \in [m]$, and
\item\label{item:exchangeLoad} $\sum_{i=1}^m \lambda_i \chi^{P_i} = \frac{p}{p-1} \chi^{A\setminus B} + \chi^{B\setminus A}$.
\end{enumerate}
\end{lemma}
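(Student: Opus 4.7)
The plan is to reduce to the equal-cardinality version (Lemma~\ref{lem:matIntExchange}) by padding whichever of $A,B$ is smaller with enough ``free'' elements that can be stripped out afterwards. This is a standard trick for matroid exchange arguments.

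Concretely, suppose without loss of generality that $|A|\leq|B|$; the opposite case is handled symmetrically (see below). Introduce a set $Z$ of $|B|-|A|$ fresh elements disjoint from $X$, and for each $j\in\{1,2\}$ let $M_j'=(X\cup Z,\mathcal{I}_j')$ be the direct sum of $M_j$ with the free matroid on $Z$, so that $R\in\mathcal{I}_j'$ iff $R\cap X\in\mathcal{I}_j$. Setting $A':=A\cup Z$ and $B':=B$, both are common independent sets of equal cardinality in $M_1',M_2'$, and Lemma~\ref{lem:matIntExchange} yields nonempty sets $P_1',\dots,P_m'\subseteq X\cup Z$ and coefficients $\lambda_i>0$ with $|P_i'\cap A'|\leq p$, $|P_i'\cap B'|\leq p-1$, $A'\Delta P_i'\in\mathcal{I}_1'\cap\mathcal{I}_2'$, and the load identity $\sum_i\lambda_i\chi^{P_i'}=\tfrac{p}{p-1}\chi^{A'\setminus B'}+\chi^{B'\setminus A'}$.

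Next define $P_i:=P_i'\setminus Z$, discarding indices where $P_i$ becomes empty. Since $A,B\subseteq X$, the cardinality bounds transfer verbatim: $|P_i\cap A|=|P_i'\cap A|\leq|P_i'\cap A'|\leq p$, and similarly $|P_i\cap B|\leq p-1$. A short unfolding of the symmetric difference, using $A\cap Z=\emptyset=P_i\cap Z$, gives $A'\Delta P_i'=(A\Delta P_i)\cup(Z\setminus P_i')$, so $A\Delta P_i\subseteq A'\Delta P_i'$; by downward closure this set lies in $\mathcal{I}_1'\cap\mathcal{I}_2'$, and since it is contained in $X$ it lies in $\mathcal{I}_1\cap\mathcal{I}_2$. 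For the load equality, evaluate the identity for $P_i'$ at $x\in X$: because $A'\setminus B'=(A\setminus B)\cup Z$, $B'\setminus A'=B\setminus A$, and $\chi^Z$ vanishes on $X$, one recovers $\sum_i\lambda_i\chi^{P_i}=\tfrac{p}{p-1}\chi^{A\setminus B}+\chi^{B\setminus A}$.

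The case $|A|>|B|$ is analogous, this time padding $B$ by a set $Z$ of $|A|-|B|$ free elements, so that $A'\setminus B'=A\setminus B$ and $B'\setminus A'=(B\setminus A)\cup Z$ absorbs the padding instead. The only delicate point of the whole argument is keeping $A$ and $B$ on the correct sides of the asymmetric bounds $p$ and $p-1$; this is automatic because we always pad the smaller side and the inclusions $P_i\cap A\subseteq P_i'\cap A'$, $P_i\cap B\subseteq P_i'\cap B'$ hold regardless of which side is padded. There is no genuine obstacle beyond the symmetric-difference bookkeeping, which is why Lemma~\ref{lem:matIntExchange} lifts to the unequal-cardinality statement essentially for free.
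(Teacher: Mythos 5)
Your proof is correct and takes essentially the same approach as the paper: extend the ground set by fresh free-matroid elements, pad to equal cardinality, apply Lemma~\ref{lem:matIntExchange}, then restrict the resulting sets to $X$ and check that all three properties survive the restriction. The only cosmetic difference is that you pad just the smaller of $A,B$ up to $\max(|A|,|B|)$, whereas the paper pads both up to the common rank $k$; the bookkeeping is identical either way.
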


\begin{proof}
To prove the lemma, we will first ``lift'' the sets $A,B\in \mathcal{I}_1\cap \mathcal{I}_2$, which may have different sizes, to larger sets $A',B'$ of the same size that are common independent sets of two auxiliary matroids $M_1'$ and $M_2'$. We can then apply Lemma~\ref{lem:matIntExchange} to $A',B'$ with respect to these auxiliary matroids, which will imply Lemma~\ref{lem:matIntExchangeGen}.

Let $k$ be the maximum cardinality of a common independent set in $M_1$ and $M_2$, i.e., $k=\max\{|I| \mid I\in \mathcal{I}_1\cap \mathcal{I}_2\}$. Let $\bar{X}$ be a finite set of size $k$ that is disjoint from $X$. We define two auxiliary matroids $M_1'=(X',\mathcal{I}_1')$ and $M_2'=(X',\mathcal{I}_2')$ on ground set $X' = X\cup \bar{X}$. To this end, let $\bar{M}=(\bar{X},\bar{\mathcal{I}})$ be the free matroid over $\bar{X}$, i.e., $\bar{\mathcal{I}} = 2^{\bar{X}}$. For $j\in \{1,2\}$, the matroid $M'_j$ is defined to be the disjoint union of $M_j$ and $\bar{M}$. More formally, for $j\in \{1,2\}$ we have
\begin{equation*}
\mathcal{I}_j' = \{ I_j \cup \bar{I} \mid I_j \in \mathcal{I}_j, \bar{I}\in \bar{\mathcal{I}}\}.
\end{equation*}
The thus defined $M_1'$ and $M_2'$ are indeed matroids because the union of matroids again leads to a matroid (see, e.g.,~\cite[volume B]{schrijver2003combinatorial}).

Given $A,B\in \mathcal{I}_1\cap \mathcal{I}_2$, let $A'=A\cup \bar{S}_A$ and $B'=B\cup \bar{S}_B$, where $\bar{S}_A$ and $\bar{S}_B$ are arbitrary subsets of $\bar{X}$ of size $k-|A|$ and $k-|B|$, respectively. Clearly, we have $A',B'\in \mathcal{I}_1'\cap \mathcal{I}_2'$, and $|A'|=|B'|=k$. We can thus apply Lemma~\ref{lem:matIntExchange} to $A'$ and $B'$ with respect to the matroids $M_1'$ and $M_2'$ to obtain a family $P_1',\dots P_m'\subseteq X'$ with coefficients $\lambda_i > 0$ for $i\in [m]$ satisfying the properties stated in Lemma~\ref{lem:matIntExchange}. It remains to observe that the family defined by $P_i = P'_i \cap X$ for $i\in [m]$ (after removing empty sets), satisfies the properties of Lemma~\ref{lem:matIntExchangeGen}. This immediately follows from the definitions of $M_1'$ and $M_2'$ which imply that for any $S'\in \mathcal{I}'_1\cap \mathcal{I}'_2$, we have $S'\cap X \in \mathcal{I}_1\cap \mathcal{I}_2$.
\end{proof}

As in the previous section, the dispersion within the exchange sets we consider will appear as an error term in our analysis. The following lemma bounds this quantity.

\begin{lemma}\label{lem:boundWithinPi}
Let $p\in \mathbb{Z}_{\geq 2}$, and let $A,B\in \mathcal{I}_1\cap \mathcal{I}_2$. Moreover, let $P_i\subseteq X$ for $i\in [m]$ and $\lambda_i > 0$ for $i\in [m]$ be a family of sets and coefficients satisfying the conditions of Lemma~\ref{lem:matIntExchangeGen}. Then
\begin{equation*}
\frac{|A|}{2p-1}\sum_{i=1}^m \lambda_i \cdot d(P_i)
  \leq \frac{p}{p-1} d(A,A) + d(A,B)\enspace.
\end{equation*}
\end{lemma}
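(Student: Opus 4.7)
The plan is to apply the negative-type inequality~\eqref{eq:negTypeIneqSet} from Lemma~\ref{lem:negTypeIneq} to each pair $(A, P_i)$, and then aggregate via the convex-combination identity in part~\ref{item:exchangeLoad} of Lemma~\ref{lem:matIntExchangeGen}. The size bound $|P_i| \le (|P_i\cap A|) + (|P_i\cap B|) \le p + (p-1) = 2p-1$ is what makes the factor $\frac{|A|}{2p-1}$ on the left-hand side appear.

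First, I would assume $A \ne \emptyset$ (the claim being trivial otherwise) and, since every $P_i$ is nonempty, apply inequality~\eqref{eq:negTypeIneqSet} to the pair $(A, P_i)$, which yields
\begin{equation*}
  \frac{|A|}{|P_i|}\, d(P_i) \;\le\; d(A, P_i) \;-\; \frac{|P_i|}{|A|}\, d(A) \;\le\; d(A, P_i).
\end{equation*}
Since $|P_i|\le 2p-1$, we have $\frac{|A|}{|P_i|} \ge \frac{|A|}{2p-1}$, so
\begin{equation*}
  \frac{|A|}{2p-1}\, d(P_i) \;\le\; d(A, P_i).
\end{equation*}
Multiplying by $\lambda_i>0$ and summing over $i\in[m]$ gives
\begin{equation*}
  \frac{|A|}{2p-1}\sum_{i=1}^m \lambda_i\, d(P_i)
  \;\le\; \sum_{i=1}^m \lambda_i\, d(A, P_i).
\end{equation*}

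Next, I would rewrite the right-hand side by swapping the order of summation. Writing $d(A,P_i) = \sum_{x\in X} \chi^{P_i}(x)\, d(A,x)$ and using property~\ref{item:exchangeLoad} of Lemma~\ref{lem:matIntExchangeGen},
\begin{equation*}
  \sum_{i=1}^m \lambda_i\, d(A, P_i)
  \;=\; \sum_{x\in X} \Bigl(\sum_{i=1}^m \lambda_i \chi^{P_i}(x)\Bigr) d(A,x)
  \;=\; \frac{p}{p-1}\, d(A, A\setminus B) \;+\; d(A, B\setminus A).
\end{equation*}
Finally, since $d$ is nonnegative, $d(A, A\setminus B)\le d(A,A)$ and $d(A, B\setminus A) \le d(A,B)$, which gives
\begin{equation*}
  \frac{p}{p-1}\, d(A, A\setminus B) \;+\; d(A, B\setminus A) \;\le\; \frac{p}{p-1}\, d(A,A) \;+\; d(A,B),
\end{equation*}
and chaining the inequalities establishes the lemma.

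There is no genuine obstacle here: the only thing one has to spot is that the per-set loss $\frac{|A|}{|P_i|}d(P_i)\le d(A,P_i)$ is exactly the negative-type inequality, and that the exchange coefficients $\lambda_i$ are tailor-made for telescoping the sum $\sum_i \lambda_i d(A,P_i)$ into the expression $\frac{p}{p-1}d(A,A\setminus B)+d(A,B\setminus A)$. The uniform replacement of $|P_i|$ by the crude upper bound $2p-1$ is what gives the clean factor on the left.
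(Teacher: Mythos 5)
Your proof is correct and follows essentially the same route as the paper: apply the negative-type inequality~\eqref{eq:negTypeIneqSet} to each pair $(A,P_i)$, use $|P_i|\le 2p-1$, aggregate with the coefficients from property~\ref{item:exchangeLoad}, and finish with monotonicity of $d(A,\cdot)$. The only cosmetic difference is that you discard the nonnegative term $\frac{|P_i|}{|A|}d(A)$ immediately, whereas the paper carries it through the summation and drops it at the end.
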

\begin{proof}
For $i\in [m]$ we have
\begin{align*}
d(A,P_i \cap A) + d(A, P_i \cap B) &= d(A, P_i)
  \geq \frac{|P_i|}{|A|} d(A) + \frac{|A|}{|P_i|} d(P_i),
\end{align*}
where the inequality follows from~\eqref{eq:negTypeIneqSet}.
Multiplying both the left-hand side and right-hand side by $\lambda_i$, and summing the resulting inequality over all $i\in [m]$, we get the following inequality, which follows from property~\ref{item:exchangeLoad} of Lemma~\ref{lem:matIntExchangeGen}:
\begin{align*}
\frac{p}{p-1} d(A,A\setminus B) + d(A, B\setminus A)
  &\geq \frac{\frac{p}{p-1} |A\setminus B| + |B\setminus A|}{|A|} d(A)
   + \frac{|A|}{2p-1} \sum_{i=1}^m \lambda_i \cdot d(P_i) \\
  &\geq 
   \frac{|A|}{2p-1} \sum_{i=1}^m \lambda_i \cdot d(P_i).
\end{align*}
The desired statement now follows because $d(A,A) \geq d(A,A\setminus B)$ and $d(A, B) \geq d(A,B\setminus A)$.
\end{proof}

The following lemma shows that the local search algorithm will go towards a set $A$ such that $\sqrt{d(A)}$ approaches
$(1-\frac{2}{p-1}-\frac{24p}{k})\sqrt{d(\OPT)}$, and hence $d(A)$ approaches $(1-\frac{2}{p-1}-\frac{24p}{k})^2 d(\OPT)$.
Notice that for large enough $k$, one can choose a $p$ such that this approximation factor gets arbitrarily small.

\begin{lemma}\label{lem:matIntAverageImp}
Let $p\in \mathbb{Z}_{\geq 2}$ with $8p \leq k$, where $k$ is the cardinality of a maximum cardinality set in $\mathcal{I}_1\cap \mathcal{I}_2$. Let $A\in \mathcal{I}_1\cap \mathcal{I}_2$ with $|A|\geq \sfrac{k}{2}$, and let $P_i\subseteq X$ for $i\in [m]$ and $\lambda_i > 0$ for $i\in [m]$ be a family of sets and coefficients satisfying the conditions of Lemma~\ref{lem:matIntExchangeGen} for the sets $A$ and $\OPT$. Then 
\begin{align*}
\frac{1}{\sum_{i=1}^m \lambda_i}&\sum_{i=1}^m \lambda_i \cdot
   \left(d(A \Delta P_i) - d(A)\right)\\
&\geq
\begin{cases}
  \frac{1}{3k}\sqrt{d(A)} \left( \sqrt{d(\OPT)} \cdot (1-\frac{2}{p-1}-\frac{24p}{k}) - \sqrt{d(A)}\right) &\text{if } d(A) > \frac{1}{50}d(\OPT),\\
   \frac{1}{24k} d(\OPT) &\text{if } d(A) \leq \frac{1}{50}d(\OPT)\enspace .
\end{cases}
\end{align*}
\end{lemma}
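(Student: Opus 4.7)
The plan is to bound $\sum_i\lambda_i (d(A\Delta P_i) - d(A))$ from below by a combination of $d(A,\OPT)$ and $d(A)$, and then extract the $\sqrt{d(A)\,d(\OPT)}$ structure via the negative-type inequality. Property (ii) of Lemma~\ref{lem:matIntExchangeGen} forces $P_i \subseteq (A\setminus\OPT)\cup(\OPT\setminus A)$, so writing $S_i := P_i\cap A$ and $T_i := P_i\cap\OPT$ (disjoint subsets of $A\setminus\OPT$ and $\OPT\setminus A$) gives $A\Delta P_i = (A\setminus S_i)\cup T_i$, and a direct pair-count yields
\begin{equation*}
d(A\Delta P_i) - d(A) = d(A,T_i) - d(A,S_i) + d(T_i) + d(S_i) - d(S_i,T_i) \geq d(A,T_i) - d(A,S_i) - d(P_i),
\end{equation*}
after dropping $d(T_i),d(S_i)\geq 0$ and using $d(S_i,T_i)\leq d(P_i)$. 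Multiplying by $\lambda_i$ and summing, property (ii) gives $\sum_i \lambda_i d(A,T_i) = d(A,\OPT\setminus A)$ and $\sum_i \lambda_i d(A,S_i) = \tfrac{p}{p-1}\,d(A,A\setminus \OPT)$; adding the non-negative correction $\tfrac{1}{p-1}d(A,A\cap\OPT)$ reorganizes these into $d(A,\OPT) - \tfrac{2p}{p-1}d(A)$. Invoking Lemma~\ref{lem:boundWithinPi} with $|A|\geq k/2$ (so that $\tfrac{2p-1}{|A|}\leq\tfrac{4p}{k}$) and $d(A,A)=2d(A)$, I obtain
\begin{equation*}
\sum_i \lambda_i\bigl(d(A\Delta P_i) - d(A)\bigr) \;\geq\; \left(1-\tfrac{4p}{k}\right) d(A,\OPT) \;-\; \tfrac{2p}{p-1}\left(1+\tfrac{4p}{k}\right) d(A). \quad (\star)
\end{equation*}
Also, since every $|P_i|\geq 1$, $\Lambda := \sum_i \lambda_i \leq \tfrac{p}{p-1}|A\setminus\OPT|+|\OPT\setminus A| \leq \tfrac{(2p-1)k}{p-1}\leq 3k$ for $p\geq 2$.

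I then split into the two regimes. If $d(A)\leq d(\OPT)/50$, keeping only the $\OPT$-contribution in \eqref{eq:negTypeIneqSet} gives $d(A,\OPT)\geq \tfrac{|A|}{|\OPT|}d(\OPT)\geq \tfrac{1}{2}d(\OPT)$. Since $8p\leq k$ implies $\tfrac{4p}{k}\leq\tfrac{1}{2}$, and $\tfrac{2p}{p-1}\leq 4$ for $p\geq 2$, a short arithmetic check shows the right-hand side of $(\star)$ is at least $\tfrac{1}{8}d(\OPT)$; dividing by $\Lambda\leq 3k$ yields the claimed $\tfrac{1}{24k}d(\OPT)$. If instead $d(A) > d(\OPT)/50$, I use the AM-GM consequence $d(A,\OPT)\geq 2\sqrt{d(A)d(\OPT)}$ of \eqref{eq:negTypeIneqSet}. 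Substituting into $(\star)$ and factoring out $\sqrt{d(A)}$ produces the bracket $2(1-\tfrac{4p}{k})\sqrt{d(\OPT)} - \tfrac{2p}{p-1}(1+\tfrac{4p}{k})\sqrt{d(A)}$. Reducing this to the target bracket $\sqrt{d(\OPT)}(1-\tfrac{2}{p-1}-\tfrac{24p}{k})-\sqrt{d(A)}$, using $\sqrt{d(A)}\leq\sqrt{d(\OPT)}$ (by optimality of $\OPT$), boils down to the elementary inequality
\begin{equation*}
\tfrac{p+1}{p-1} + \tfrac{16p}{k} \;\geq\; \tfrac{p+1}{p-1} + \tfrac{8p^2}{(p-1)k},
\end{equation*}
i.e.\ $2(p-1)\geq p$, which holds for $p\geq 2$. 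Dividing by $\Lambda\leq 3k$ completes this case.

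The main technical obstacle is the algebraic consolidation in the second regime, where the two distinct error contributions — the $\tfrac{2}{p-1}$ overhead from the $(p,p-1)$-asymmetry of the exchange weights, and the $\tfrac{24p}{k}$ overhead from bounding $\sum_i \lambda_i d(P_i)$ via Lemma~\ref{lem:boundWithinPi} — must be matched precisely against the coefficients in $(\star)$ so that the reduced bracket takes exactly the form claimed. Some care is also warranted at the ``boundary'' where the claimed lower bound changes sign: there $d(A)$ is already a $(1-O(\tfrac{1}{p}) - O(\tfrac{p}{k}))^2$-approximation of the optimum, and combining $(\star)$ with $d(A)\leq d(\OPT)$ shows the claimed inequality remains valid in exactly the range required by the subsequent convergence analysis.
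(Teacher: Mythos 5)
Your proof is correct and follows essentially the same route as the paper: expand $d(A\Delta P_i)-d(A)$ pairwise into $d(A,T_i)-d(A,S_i)+d(S_i)+d(T_i)-d(S_i,T_i)$, drop the non-negative terms and bound $d(S_i,T_i)$ by $d(P_i)$, pass to the weighted sum via property (ii) of Lemma~\ref{lem:matIntExchangeGen}, control $\sum_i\lambda_i d(P_i)$ by Lemma~\ref{lem:boundWithinPi}, bound $\Lambda\leq 3k$, and finish via \eqref{eq:negTypeIneqSet} (either dropping the $d(A)$-term when $d(A)$ is small, or using AM-GM to surface $\sqrt{d(A)d(\OPT)}$). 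Your consolidation of the final bracket—reducing the required coefficient comparison to the single inequality $2(p-1)\geq p$—is a bit cleaner than the paper's chain of bounds on $-\frac{2p}{p-1}(1+\alpha)+2(1-\alpha)$, but it is the same estimate in disguise. The place where you are slightly informal is the last sentence about the ``boundary'': dividing $\sum_i\lambda_i(d(A\Delta P_i)-d(A))\geq \sqrt{d(A)}(\cdots)$ by $\Lambda$ and replacing $\Lambda$ by $3k$ only preserves the inequality when the bracket $(\cdots)$ is non-negative, so strictly speaking the stated lower bound needs a separate (easy) word when $\sqrt{d(A)}>(1-\frac{2}{p-1}-\frac{24p}{k})\sqrt{d(\OPT)}$. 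The paper glosses over this in exactly the same way, and it is harmless downstream since Theorem~\ref{thm:convLocalSearchMatInt} only ever exploits the bound together with the monotonicity $d(A_{j+1})\geq d(A_j)$ of the local search; but your remark acknowledges the issue without actually resolving it, so a one-line observation (e.g.\ that the left-hand side is trivially $\geq 0$ when a non-improving exchange is allowed, or that the target inequality is vacuous in this regime for the purposes of the convergence analysis) would make the argument airtight.
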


\begin{proof}
For brevity, let $S_i=P_i\cap A$ and $T_i = P_i \cap \OPT$ for $i\in [m]$. We have for $i\in [m]$,
\begin{align*}
d(A\Delta P_i) &= d((A\setminus S_i) \cup T_i) = d(A\setminus S_i) + d(T_i) + d(A\setminus S_i, T_i)\\
  &= d(A) + d(S_i) - d(A,S_i) + d(T_i) + d(A,T_i) - d(S_i,T_i)\\
  &\geq d(A) - d(A,S_i) + d(A,T_i) - d(S_i, T_i)\enspace.
\end{align*}
Let $\alpha = \frac{2p-1}{|A|}$ and $\lambda= \sum_{i=1}^m \lambda_i$.

Multiplying the above inequality by $\lambda_i$ and summing it over all $i\in [m]$, we obtain the inequality below, which follows from property~\ref{item:exchangeLoad} of Lemma~\ref{lem:matIntExchangeGen}:
\begin{equation}\label{eq:preBoundSymDiff}
\begin{aligned}
\sum_{i=1}^m \lambda_i \cdot d(A\Delta P_i)
  &\geq \lambda\cdot d(A) - \frac{p}{p-1} d(A,A\setminus \OPT)
     + d(A, \OPT\setminus A) - \sum_{i=1}^m \lambda_i \cdot d(S_i, T_i)\\
  &\geq \lambda\cdot d(A) - \frac{p}{p-1} d(A,A) + d(A,\OPT) - \sum_{i=1}^m \lambda_i \cdot d(S_i,T_i)\\
  &\geq \lambda\cdot d(A) - \frac{p}{p-1} d(A,A) + d(A,\OPT) - \sum_{i=1}^m \lambda_i \cdot d(P_i)\\
  &\geq \lambda\cdot d(A) - \frac{p}{p-1} (1+\alpha) d(A,A)
                       + (1-\alpha) d(A,\OPT)\\
  &\geq \lambda\cdot d(A) - \frac{2 p}{p-1} (1+\alpha) d(A)
                       + (1-\alpha) \left(\frac{|\OPT|}{|A|} d(A) + \frac{|A|}{|\OPT|} d(\OPT) \right),
\end{aligned}
\end{equation}
where the second-to-last inequality follows by Lemma~\ref{lem:boundWithinPi}, and the last one by~\eqref{eq:negTypeIneqSet}.

We continue the expansion of inequality~\eqref{eq:preBoundSymDiff} in two different ways, depending on whether we are in the case $d(A) \leq \frac{1}{50}d(\OPT)$ or $d(A) > \frac{1}{50}d(\OPT)$.

\medskip

\noindent\textbf{Case $d(A) \leq \frac{1}{50} d(\OPT)$:}

To lower bound $(\sfrac{|\OPT|}{|A|}) d(A) + (\sfrac{|A|}{|\OPT|}) d(\OPT)$, consider the coefficients in front of $d(A)$ and $d(\OPT)$. For brevity, let $q=\sfrac{|\OPT|}{|A|}$.
Notice that $q + q^{-1} \geq 2$, no matter how large $|A|$ and $|\OPT|$ are, assuming they are both at least one.
Moreover, since $\OPT$ and $A$ are maximal sets in $\mathcal{I}_1\cap \mathcal{I}_2$, we have $\frac{k}{2} \leq |A|,|\OPT| \leq k$, which implies $q,q^{-1} \geq \frac{1}{2}$. Now consider the expression $q_1 \cdot d(A) + q_2 \cdot d(\OPT)$ under the constraints $q_1+q_2 \geq 2$ and $q_2\geq \sfrac{1}{2}$. This expression is minimized for $q_1=\sfrac{3}{2}$ and $q_2=\sfrac{1}{2}$ because $d(\OPT) \geq d(A)$, and hence
\begin{equation*}
\frac{|\OPT|}{|A|} d(A) + \frac{|A|}{|\OPT|} d(\OPT) \geq \frac{3}{2} d(A) + \frac{1}{2} d(\OPT)\enspace.
\end{equation*}

Using the above inequality, we can now further expand~\eqref{eq:preBoundSymDiff} as shown below. Moreover, we use $\alpha = \frac{2p-1}{|A|} \leq \frac{1}{2}$, which follows from $|A|\geq \frac{k}{2}$ and $8p\leq k$.
\begin{align*}
\sum_{i=1}^{m} \lambda_i \left(d(A\Delta P_i) - d(A) \right)
   &\geq -\frac{2p}{p-1}(1+\alpha) d(A) + (1-\alpha) \left(\frac{3}{2}d(A) + \frac{1}{2} d(\OPT) \right)\\
   &\geq -\frac{3p}{p-1} d(A) + \frac{3}{4} d(A) + \frac{1}{4} d(\OPT)
      && \text{($\alpha \geq \sfrac{1}{2}$)}\\
   &\geq -6 d(A) + \frac{3}{4} d(A) + \frac{1}{4}d(\OPT)
      && \text{($p\geq 2$)}\\
   &= -\frac{21}{4} d(A) + \frac{1}{4} d(\OPT)\\
   &\geq \frac{1}{8} d(\OPT)\enspace.
      &&\hspace*{-2cm}\text{(using $d(A) \leq \sfrac{1}{50}\cdot d(\OPT)$)}
\end{align*}

The result for this case now follows by dividing both sides by $\lambda$ and observing that $\lambda \leq 3k$, which holds due to
\begin{align*}
\lambda &\leq \left\|\sum_{i=1}^m \lambda_i \chi^{P_i} \right\|_1
    && \text{(using $P_i\neq\emptyset$ for $i\in [m]$)}\\
    &= \left\|\frac{p}{p-1}\chi^{A\setminus B} + \chi^{B\setminus A} \right\|_1
        && \text{(by point~\ref{item:exchangeLoad} of Lemma~\ref{lem:matIntExchangeGen})}\\
   &\leq \frac{p}{p-1} k + k  \leq 3k.
\end{align*}
\medskip

\noindent\textbf{Case $d(A) > \frac{1}{50} d(\OPT)$:}

The minimum of the function $f(q) = q\cdot d(A) + \sfrac{1}{q}\cdot d(\OPT)$ for $q >0$ is achieved at $q=\sqrt{\sfrac{d(B)}{d(A)}}$, and leads to a value of $2 \sqrt{d(A) d(\OPT)}$. Hence
\begin{equation*}
\frac{|\OPT|}{|A|} d(A) + \frac{|A|}{|\OPT|} d(\OPT) \leq 2 \sqrt{d(A) d(\OPT)}.
\end{equation*}
Using the above inequality to further expand~\eqref{eq:preBoundSymDiff} we obtain
\begin{equation}\label{eq:preBoundSymDiff2}
\begin{aligned}
\sum_{i=1}^m \lambda_i (d(A\Delta P_i) - d(A))
   &\geq - \frac{2p}{p-1} (1+\alpha) d(A)
      + 2 (1-\alpha) \sqrt{d(A) d(\OPT)}\\
  &= \sqrt{d(A)} \left( \left(1-\frac{2p}{p-1}(1+\alpha)\right)\sqrt{d(A)} + 2(1-\alpha) \sqrt{d(\OPT)} - \sqrt{d(A)} \right)\\
  &\geq \sqrt{d(A)} \left( \left(1-\frac{2p}{p-1}(1+\alpha) + 2(1-\alpha) \right)\sqrt{d(\OPT)} - \sqrt{d(A)} \right),
\end{aligned}
\end{equation}
where we used $d(A) \leq d(\OPT)$ and $1-\frac{2p}{p-1}(1+\alpha) < 0$ for the last inequality.
It remains to simplify the coefficient of $\sqrt{d(\OPT)}$ within the parentheses.
\begin{align*}
-\frac{2p}{p-1}(1+\alpha) + 2(1-\alpha) &= -\frac{2}{p-1}(1+\alpha) - 4\alpha\\
  &= -\frac{2}{p-1} - \frac{2p-1}{|A|} \cdot \left(\frac{2}{p-1} + 4\right)
        && \text{(using $\alpha = \sfrac{(2p-1)}{|A|}$)} \\
  &\geq -\frac{2}{p-1} -\frac{2p}{|A|} \cdot \frac{4p-2}{p-1}\\
  &\geq -\frac{2}{p-1} - \frac{12p}{|A|}
        && \text{($\sfrac{(4p-2)}{(p-1)}\leq 6$ for $p\geq 2$)}\\
  &\geq -\frac{2}{p-1} - \frac{24p}{k}\enspace.
        && \text{(using $|A|\geq \sfrac{k}{2}$)}
\end{align*}
The result for this case now follows by using the above inequality in~\eqref{eq:preBoundSymDiff2}, dividing both sides by $\lambda$, which satisfies $\lambda\leq 3k$ as shown in the first case.
\end{proof}

The next theorem shows that convergence happens fast. This readily follows by Lemma~\ref{lem:matIntAverageImp}, because as long as $d(A)$ is small, the second case of Lemma~\ref{lem:matIntAverageImp} guarantees an improvement of at least $(\sfrac{1}{24k})d(\OPT)$. Hence, this case can only happen during $O(k)$ iterations. Later, the first case of Lemma~\ref{lem:matIntAverageImp} applies, which guarantees a very fast convergence towards $(1-\frac{2}{p-1}-\frac{24p}{k})^2 d(\OPT)$.

\begin{theorem}\label{thm:convLocalSearchMatInt}
Let $p\in \mathbb{Z}_{\geq 2}$ such that $8p\leq k$, where $k$ is the cardinality of a maximum cardinality set in $\mathcal{I}_1 \cap \mathcal{I}_2$, and $1 - \frac{2}{p-1} - \frac{24p}{k} > 0$.
Let $A\in \mathcal{I}_1\cap \mathcal{I}_2$ with $|A|\geq \sfrac{k}{2}$. Then, letting Algorithm~\ref{alg:localSearchMatInt} run for $\ell \geq k$ iterations with starting set $A$ and parameter $p$, leads to a set $A_\ell \in \mathcal{I}_1\cap \mathcal{I}_2$ satisfying
\begin{equation*}
d(A_{\ell}) \geq 
\left(
   \left(1-\frac{2}{p-1} - \frac{24p}{k}\right)^2
- 2 \left(1 - \frac{1}{60 k} \right)^{\ell-k} \right)
d(\OPT)\enspace.
\end{equation*}
\end{theorem}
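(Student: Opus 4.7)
The plan is to partition the $\ell$ iterations into two phases according to the dichotomy in Lemma~\ref{lem:matIntAverageImp}. Two easy invariants are used throughout. First, every $A_i$ produced by Algorithm~\ref{alg:localSearchMatInt} satisfies $|A_i|\geq k/2$, because after each exchange $A$ is augmented to a maximal set in $\mathcal{I}_1\cap \mathcal{I}_2$ and any maximal common independent set has size at least $k/2$; this makes Lemma~\ref{lem:matIntAverageImp} applicable in every iteration. Second, $d(A_i)$ is monotonically nondecreasing, since an exchange only triggers if it strictly improves $d$ and subsequent augmentation by elements with non-negative distances cannot decrease $d$. The key leverage is that since Algorithm~\ref{alg:localSearchMatInt} selects the exchange $(S,T)$ maximizing $d((A\setminus S)\cup T)$, and the sets $P_i$ furnished by Lemma~\ref{lem:matIntExchangeGen} (applied with $A=A_i$ and $B=\OPT$) correspond to feasible exchanges of the right sizes, the per-iteration improvement dominates the $\lambda_i$-weighted average improvement bounded in Lemma~\ref{lem:matIntAverageImp}.

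\textbf{Phase 1.} While $d(A_i)\leq d(\OPT)/50$, Lemma~\ref{lem:matIntAverageImp} guarantees an additive per-iteration improvement of at least $d(\OPT)/(24k)$. Since only a gap of $d(\OPT)/50$ must be closed, this phase lasts at most $\lceil 24k/50\rceil<k$ iterations; by monotonicity, $d(A_k)>d(\OPT)/50$.

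\textbf{Phase 2.} Set $u_i:=\sqrt{d(A_i)}$ and $u^{*}:=\beta\sqrt{d(\OPT)}$ with $\beta:=1-\tfrac{2}{p-1}-\tfrac{24p}{k}$. For $i\geq k$, Lemma~\ref{lem:matIntAverageImp} reads $u_{i+1}^2-u_i^2\geq \tfrac{u_i}{3k}(u^{*}-u_i)$. Factoring $u_{i+1}^2-u_i^2=(u_{i+1}+u_i)(u_{i+1}-u_i)$ and using $u_i\geq\sqrt{d(\OPT)/50}$ together with $u_i+u_{i+1}\leq 2\sqrt{d(\OPT)}$ (which is WLOG, since $u_{i+1}>u^{*}$ already beats the target) converts the quadratic recurrence into the linear contraction $u^{*}-u_{i+1}\leq (1-\tfrac{1}{60k})(u^{*}-u_i)$, the constant $60$ loosely absorbing $2\sqrt{50}<20$. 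Iterating for $\ell-k$ steps starting from the trivial bound $u^{*}-u_k\leq u^{*}$ yields $u_\ell\geq u^{*}\bigl(1-(1-\tfrac{1}{60k})^{\ell-k}\bigr)$.

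\textbf{Assembly and main obstacle.} Squaring, applying $(1-x)^2\geq 1-2x$, and using $\beta^2\leq 1$ produces exactly the stated bound $d(A_\ell)\geq \bigl(\beta^2-2(1-\tfrac{1}{60k})^{\ell-k}\bigr)d(\OPT)$. The main delicacy is Phase 2: translating the quadratic recurrence in $u_i^2$ into a genuine geometric contraction in $u^{*}-u_i$ requires the lower bound $u_i\geq\sqrt{d(\OPT)/50}$ secured by Phase 1, and one must track the factor $2$ that appears in front of the geometric term when passing back from $u_\ell$ to $d(A_\ell)$. The remaining work is routine numerical bookkeeping.
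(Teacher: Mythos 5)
Your proposal is correct and follows essentially the same route as the paper's proof: an additive-progress phase of at most $k$ iterations via the second case of Lemma~\ref{lem:matIntAverageImp}, followed by converting the quadratic recurrence into the geometric contraction $\beta\sqrt{d(\OPT)}-\sqrt{d(A_{j+1})}\leq (1-\frac{1}{60k})(\beta\sqrt{d(\OPT)}-\sqrt{d(A_j)})$ using the lower bound $d(A_j)>\frac{1}{50}d(\OPT)$, and finally passing back to $d(A_\ell)$ with a factor of $2$. The only cosmetic difference is that you square and use $(1-x)^2\geq 1-2x$ where the paper multiplies by $\beta\sqrt{d(\OPT)}+\sqrt{d(A_\ell)}$ and bounds $\beta(\beta+1)\leq 2$; you also make explicit the harmless WLOG when $\sqrt{d(A_j)}$ already exceeds $\beta\sqrt{d(\OPT)}$, which the paper leaves implicit.
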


\begin{proof}
For brevity, let $\beta = 1-\frac{2}{p-1} - \frac{24p}{k}$.
Let $A_0 = A$ be the set at the beginning of the algorithm, and for $j\in [\ell]$ we denote by $A_j\in \mathcal{I}_1\cap \mathcal{I}_2$ the common independent set after $j$ iterations of Algorithm~\ref{alg:localSearchMatInt}. 

Lemma~\ref{lem:matIntAverageImp} shows a lower bound on the average improvement induced by the exchanges $A\Delta P_i$. Since $A_{j+1}$ corresponds to the best exchange for parameter $p$, the difference $d(A_{j+1}) - d(A_j)$ is at least as high as this lower bound.
Thus, as long as $d(A_j)\leq \frac{1}{50}d(\OPT)$, the second case of Lemma~\ref{lem:matIntAverageImp} implies that we have $d(A_{j+1})) \geq d(A_j) + \frac{1}{24k} d(\OPT)$.
Hence after at most $k$ steps, we have a set of value strictly more than~$\frac{1}{50} d(\OPT)$, i.e.,
\begin{equation}\label{eq:lateAjsLarge}
d(A_j) > \frac{1}{50} d(\OPT) \qquad \forall j\in \{k,\ldots \ell\}\enspace.
\end{equation}
hus, for the iterations $k+1, k+2 ,\ldots, \ell$, the first case of Lemma~\ref{lem:matIntAverageImp} applies, which implies for $j\in \{k,\ldots, \ell-1\}$:
\begin{align*}
d(A_{j+1}) - d(A_j) &\geq \frac{1}{3k} \sqrt{d(A_j)}
  \left( \beta \sqrt{d(\OPT)} - \sqrt{d(A_j)} \right)\\
  &\geq \frac{1}{30 k} \sqrt{d(\OPT)}
  \left( \beta \sqrt{d(\OPT)} - \sqrt{d(A_j)} \right),
\end{align*}
where the second inequality follows from~\eqref{eq:lateAjsLarge}.
Dividing both sides by $\sqrt{d(A_{j+1})} + \sqrt{d(A_j)}$ we get
\begin{align*}
\sqrt{d(A_{j+1})} - \sqrt{d(A_j)} &\geq
  \frac{1}{30k} \frac{\sqrt{d(\OPT)}}{\sqrt{d(A_{j+1})} + \sqrt{d(A_j)}}
  \left( \beta \sqrt{d(\OPT)} - \sqrt{d(A_j)} \right)\\
 &\geq \frac{1}{60k} 
  \left( \beta \sqrt{d(\OPT)} - \sqrt{d(A_j)} \right)\enspace,
\end{align*}
where the second inequality follows from $d(A_{j+1}), d(A_j)\leq d(\OPT)$.
The above inequality can be rewritten as follows
\begin{align*}
\beta \sqrt{d(\OPT)} - \sqrt{d(A_{j+1})}
   &\leq \left( 1- \frac{1}{60k} \right)
          \left(\beta \sqrt{d(\OPT)} - \sqrt{d(A_j)} \right)
   \qquad \forall j\in \{k,\ldots, \ell-1\}\enspace.
\end{align*}
Hence,
\begin{align*}
\beta \sqrt{d(\OPT)} - \sqrt{d(A_{\ell})}
   &\leq \left( 1- \frac{1}{60k} \right)^{\ell-k}
          \left(\beta \sqrt{d(\OPT)} - \sqrt{d(A_k)} \right)\\
   &\leq \left( 1- \frac{1}{60k} \right)^{\ell-k}
          \beta \sqrt{d(\OPT)}\enspace. 
\end{align*}
Multiplying both sides with $\beta\sqrt{d(\OPT)} + \sqrt{d(A_\ell)}$ we get
\begin{align*}
\beta^2 d(\OPT) - d(A_{\ell})
   &\leq \left( 1- \frac{1}{60k} \right)^{\ell-k}
          \beta \sqrt{d(\OPT)} \left(\beta \sqrt{d(\OPT)} + \sqrt{d(A_\ell)}\right)\\
   &\leq \left( 1- \frac{1}{60k} \right)^{\ell-k} d(\OPT) \beta(\beta+1)\\
   &\leq 2 \left( 1- \frac{1}{60k} \right)^{\ell-k} d(\OPT)\enspace ,
      && \text{(using $\beta\leq 1$)}
\end{align*}
which implies the desired result.
\end{proof}

To obtain a PTAS, and therefore prove Theorem~\ref{thm:PTASmatInt}, it remains to choose the right value for $p$ in Theorem~\ref{thm:convLocalSearchMatInt} if $k$ is large enough. For small $k$, one can simply enumerate over all (polynomially many) feasible solutions.
This leads to Theorem~\ref{thm:PTASmatInt}, which we repeat below for completeness.

\begin{reptheorem}{thm:PTASmatInt}
There is a PTAS for \MSD with negative type distances subject to a matroid intersection constraint.
\end{reptheorem}

\begin{proof}
Let $M_j=(X,\mathcal{I}_j)$ for $j\in \{1,2\}$ be the two matroids imposing the matroid intersection constraint to the \MSD problem.
As usual, let $k$ be the cardinality of a maximum cardinality set in $\mathcal{I}_1\cap \mathcal{I}_2$.
Let $\epsilon > 0$ be our error tolerance, i.e., we want to obtain a $(1-\epsilon)$-approximation.

If $k < \frac{144}{\epsilon}
  \left( \lceil \frac{12}{\epsilon} \rceil +1 \right)$, then we can enumerate over all sets in $\mathcal{I}_1\cap \mathcal{I}_2$, since there are only polynomially many such sets.
In what follows, we therefore assume $k \geq \frac{144}{\epsilon} \left( \lceil \frac{12}{\epsilon} \rceil +1 \right)$. We set $p=\lceil\frac{12}{\epsilon}\rceil + 1$ and let $A$ be any set in $\mathcal{I}_1\cap \mathcal{I}_2$ of cardinality $|A|\geq \sfrac{k}{2}$. Now we let Algorithm~\ref{alg:localSearchMatInt} run with starting set $A$, parameter $p$, and for a number $\ell \geq k$ of iterations satisfying
\begin{equation}\label{eq:downToE3}
2 \left( 1- \frac{1}{60 k} \right)^{\ell - k} \leq \frac{\epsilon}{3}\enspace.
\end{equation}
Notice that the above inequality is satisfied for a value $\ell=O(k \log \sfrac{1}{\epsilon})$, which is polynomially bounded. Since moreover $p = O(1)$, Algorithm~\ref{alg:localSearchMatInt} runs in polynomial time returning some set $A_\ell \in \mathcal{I}_1\cap \mathcal{I}_2$.
By Theorem~\ref{thm:convLocalSearchMatInt}, the dispersion of $A_\ell$ satisfies
\begin{align*}
d(A_{\ell}) &\geq \left(
\left( 1- \frac{2}{p-1} - \frac{24p}{k} \right)^2
- 2 \left( 1- \frac{1}{60k} \right)^{\ell - k} \right)
d(\OPT)\\
  &\geq \left(
      \left( 1- \frac{2}{p-1} - \frac{24p}{k} \right)^2
          -\frac{\epsilon}{3} \right)
      d(\OPT)
     && \text{(by~\eqref{eq:downToE3})}\\
  &\geq \left(
      \left( 1- \frac{\epsilon}{6} - \frac{24(\lceil\frac{12}{\epsilon}\rceil+1)}{k} \right)^2
 - \frac{\epsilon}{3}
\right)
 d(\OPT)
     && \text{(using~$p=\lceil\sfrac{12}{\epsilon}\rceil + 1$)}\\
  &\geq \left(
       \left(1-\frac{\epsilon}{3}\right)^2 - \frac{\epsilon}{3}
        \right) d(\OPT)
     && \text{(using $k\geq \sfrac{144}{\epsilon} (\lceil\sfrac{12}{\epsilon}\rceil + 1)$)}\\
  &\geq \left( 1 - \epsilon \right) d(\OPT)\enspace,
\end{align*}
as desired.
\end{proof}

\section{Combinations with submodular objectives} 
\label{sec:submodObj}

In this section, we are interested in generalized diversity measures, which extend the dispersion function considered previously, by adding a monotone submodular function $f: 2^X \rightarrow \mathbb{Q}_{\geq 0}$ to it. Our goal is to maximize the objective $g(A) = d(A) + f(A)$ subject to a matroid constraint $M=(X,\mathcal{I})$. For brevity, we denote this problem by \MSDf. Throughout this section, we assume that $f$ is a nonnegative monotone submodular function on $X$ that is normalized, i.e., $f(\emptyset) = 0$. Notice that $f$ being normalized is an assumption without loss of generality, because any (non-normalized) function $f$ can be replaced by the normalized submodular function $h(S) = f(S) - f(\emptyset)$.

Submodular functions capture many natural diversity functions. One classic example is to count how many different ``types'' are covered by a set of elements. Here, one can define an arbitrary family $T_1,\dots,T_p \subseteq X$ of subsets of the ground set, where each $T_i$ can be thought of items sharing a particular property, and elements in $T_i$ are said to be of type $T_i$. Notice that the same element can be part of many different types. The function that assigns to each set $A\subseteq X$, the number of different types that are contained in $A$ is a well-known example of a submodular function, which is called a coverage function.

The problem of maximizing the sum of the dispersion and a monotone submodular function has previously been considered by Borodin et al.~\cite{borodin2012max}, who considered metric distance spaces. We recall that for metric distance spaces, a locally optimal solution is (only) a $\sfrac{1}{2}$-approximation. As we briefly explain below, our stronger analysis for negative type distances is compatible with their analysis and leads to stronger results for the case of negative type distances. Moreover, we show that these results can be combined with recent results on local search algorithms for maximizing submodular functions, leading to approximation guarantees that are asymptotically optimal for a wide range of problems.

For simplicity, we will focus on the approximation guarantee of solutions that are locally optimal with respect to our local search algorithm, without going into details on how to efficiently compute a solution whose objective value is no more than a factor $1+\epsilon$ larger than the objective value of a locally optimal solution, for a fixed error $\epsilon >0$. The steps to show fast convergence of the local search algorithm are standard, and can also be done analogously to our analysis in Section~\ref{sec:matConstraints}.

We start by giving a very brief overview of some recent results on which we build up. More precisely, we first give a quick overview of local search results by Borodin et al.~\cite{borodin2012max}, which we strengthen in the following. We then recap a non-oblivious local search algorithm by Filmus and Ward~\cite{filmus2014monotone} to obtain an optimal approximation guarantee for submodular maximization under a matroid constraint, and an extension of it by Sviridenko et al.~\cite{sviridenko2015optimal} to submodular functions with bounded curvature.
Our strengthened results then readily follow by combining these previous approaches together with our stronger analysis for distances of negative type.

Throughout this section, $A,B\subseteq X$ are bases of $M$, and $\pi: A \rightarrow B$ is a Brualdi bijection between $A$ and $B$, if not stated otherwise.

\subsection*{Local search results by Borodin et al.~\cite{borodin2012max}}

Borodin et al.~\cite{borodin2012max} study the \MSDf problem for metric distance spaces. They show that the local search algorithm achieves an approximation guarantee of $\frac{1}{2}$ for each objective function individually, and they prove that the same result carries over to the combined objective function. More specifically, in \cite[Lemmas 5 and 7]{borodin2012max} it is proven that,\footnote{Some results cited in this section are originally stated in less generality, but their original proofs carry on directly to the more general versions.} if $d$ is metric, and $f$ is submodular, monotone and normalized, then

\begin{align}
d(A)&\geq \frac{1}{2}d(B)+\frac{1}{2}\sum_{a\in A}\left(d(A)-d(A-a+\pi(a))\right), \ \text{ and} \label{ineq:metric} \\
f(A)&\geq \frac{1}{2}f(B)+\frac{1}{2}\sum_{a\in A}\left(f(A)-f(A-a+\pi(a))\right). \label{ineq:submod}
\end{align}
Inequality~\eqref{ineq:metric} only holds if the rank $k$ of the underlying matroid is at least $3$. For simplicity, we assume $k\geq 3$ in what follows.
Summing up~\eqref{ineq:metric} and~\eqref{ineq:submod}, one immediately obtains 
$$g(A)\geq \frac{1}{2}g(B)+\frac{1}{2}\sum_{a\in A}\left(g(A)-g(A-a+\pi(a))\right).$$

From each of these lines, one can easily derive that any local optimum has an approximation guarantee of $\frac{1}{2}$ for the corresponding objective function. This approximation is tight for function $d$, as we mentioned in the introduction, assuming that the planted clique problem is hard; and for function $f$ this approximation is known to match the locality gap of this local search algorithm.

\subsection*{Non-oblivious local search by Filmus and Ward~\cite{filmus2014monotone}}

When trying to maximize a function $f$, sometimes it is convenient to define an auxiliary potential function $F:2^X \rightarrow \mathbb{R}_{\geq 0}$, and run a local search algorithm over $F$ instead of $f$. More precisely, an exchange step is done if it leads to a strict improvement of the function value of $F$ instead of $f$. Such an algorithm is called \emph{non-oblivious}.
Filmus and Ward \cite[Theorem 5.1]{filmus2014monotone} prove that there is a potential function $F: 2^X \rightarrow \mathbb{R}_{\geq 0}$ for $f$ such that
\begin{equation}\label{ineq:filmus}
f(A)\geq \left(1-\frac{1}{e}\right)f(B)+\left(1-\frac{1}{e}\right)\sum_{a\in A}\left(F(A) - F(A-a+\pi(a))\right)\enspace.
\end{equation}
From this, they conclude that the non-oblivious algorithm associated to $F$ offers an approximation guarantee of $1-\frac{1}{e}-\epsilon$ for the matroid-constrained maximization of a monotone, normalized submodular function. This is best possible, as it is proven by Feige \cite{feige1998threshold} that improving the bound $1-\frac{1}{e}$ is NP-hard even if $f$ is an explicitly given coverage function. And Nemhauser and Wolsey \cite{nemhauser1978best} show that improving upon this bound requires an exponential number of queries in the value oracle model.

We remark that an exact evaluation of the potential function $F$ defined in~\cite{filmus2014monotone} requires a superpolynomial number of evaluations of $f$. However, the authors prove that all the necessary evaluations of $F$ during the execution of the algorithm can be approximated efficiently, in such a way that with high probability, the ensuing loss in the approximation ratio is arbitrarily small. For clarity of exposition, we assume in this paper that we have access to a value oracle for $F$. It is also proven in~\cite{filmus2014monotone} that $F$ has further properties which are sufficient for the non-oblivious algorithm to converge fast.

\subsection*{Submodular maximization with bounded curvature by Sviridenko et al.~\cite{sviridenko2015optimal}}

Conforti and Cornu\'ejols \cite{conforti1984submodular} define the \emph{curvature} of a monotone submodular function $f$ as 
$$c=1-\min_{x\in X}\frac{f(X)-f(X-x)}{f(x)-f(\emptyset)}.$$
The curvature is a coefficient between 0 and 1 that measures how close the function is to being linear, where $c=0$ corresponds to a linear function and $c=1$ to an arbitrary submodular function. For a monotone submodular function $f$ of curvature $c$, Sviridenko et al.~\cite{sviridenko2015optimal} consider the decomposition $f=l+f'$, where $l(A)=f(\emptyset)+\sum_{a\in A} (f(X)-f(X-a))$, and $f'(A)=f(A)-l(A)$, for each $A\subseteq X$. They prove that $l$ is linear, $f'$ is submodular, monotone and normalized, and $f'(A)\leq c \cdot f(A)$ for each $A\subseteq X$. It is easy to see that for any linear function $l$,
\begin{equation}\label{eq:linear}
l(A)= l(B)+\sum_{a\in A}\left(l(A)-l(A-a+\pi(a))\right).
\end{equation}
 
Hence, if we define a potential function $F'$ for $f'$ as in inequality (\ref{ineq:filmus}) (see \cite{filmus2014monotone}), and use it to define the potential function $F=l+\left(1-\frac{1}{e}\right)F'$ for $f$, the sum of inequality (\ref{ineq:filmus}) and equation (\ref{eq:linear}) gives for any $A,B\in \mathcal{I}$
 \begin{align}\label{ineq:sviridenko}
f(A) &\geq f(B) -\frac{1}{e}f'(B)+\sum_{a\in A}\left(F(A) - F(A-a+\pi(a))\right)\nonumber\\
 & \geq \left(1-\frac{c}{e}\right)f(B)+\sum_{a\in A}\left(F(A)-F(A-a+\pi(a))\right).
\end{align}
Thus, in \cite{sviridenko2015optimal} they conclude that, for a monotone submodular function $f$ of curvature $c$,  the local search algorithm associated to the potential function $F$ offers an approximation guarantee of $1-\frac{c}{e} - \epsilon$. Moreover, they extend the negative result of \cite{nemhauser1978best} to prove that this bound is best possible; namely they prove that, for each $c\in [0,1]$, improving upon the bound of $1-\frac{c}{e}$ requires an exponential number of queries in the value oracle model.

\subsection*{Putting things together}

By combining our stronger analysis for local search subject to negative distance spaces with the results highlighted above, we obtain our main result which leads to Theorem~\ref{thm:msd+f}.

\begin{theorem}\label{thm:d+f_NT}
Consider \MSDf over a rank $k$ matroid constraint, where $d$ is of negative type and $f$ has curvature $c$. Let $\lambda_d=\frac{d(\OPT)}{g(\OPT)}$ and $\lambda_f=\frac{f(\OPT)}{g(\OPT)}$. Then, there is a non-oblivious local search algorithm such that any locally optimal solution has an approximation guarantee of 
$$1-\lambda_d \frac{4}{k}-\lambda_f \frac{c}{e}\geq 1-\max\left\{\frac{4}{k}, \frac{c}{e}\right\}.$$ 
\end{theorem}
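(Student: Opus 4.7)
The plan is to mimic the Borodin--Lee--Ye combination template, but with the negative-type bound of Lemma~\ref{lem:matAverageImp} replacing the metric bound~\eqref{ineq:metric} used there, and with the curvature-aware version of Filmus--Ward from~\eqref{ineq:sviridenko} replacing the $\tfrac{1}{2}$-style inequality~\eqref{ineq:submod}. Concretely, I would define the combined potential $G(S) := d(S) + F(S)$, where $F$ is the Sviridenko--Vondr\'ak--Ward non-oblivious potential for $f$ that appears in~\eqref{ineq:sviridenko}, and run the canonical single-swap local search of Algorithm~\ref{alg:localSearchMat} with $G$ (not $d$) as the acceptance criterion. The goal is then to show that every basis $A$ which is locally optimal for $G$ already achieves the claimed guarantee against $B := \OPT$.

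The core calculation I would carry out: pick a Brualdi bijection $\pi : A \to B$ via Lemma~\ref{lem:brualdi} and feed it into two parallel lower bounds. From Lemma~\ref{lem:matAverageImp}, inequality~\eqref{eq:matAverageImpOpt},
\begin{equation*}
d(A) \;\geq\; \left(1 - \tfrac{4}{k}\right) d(B) \;+\; \sum_{a \in A} \bigl(d(A) - d(A - a + \pi(a))\bigr),
\end{equation*}
and from~\eqref{ineq:sviridenko},
\begin{equation*}
f(A) \;\geq\; \left(1 - \tfrac{c}{e}\right) f(B) \;+\; \sum_{a \in A} \bigl(F(A) - F(A - a + \pi(a))\bigr).
\end{equation*}
The crucial alignment is that the swap-improvement sums on the right carry coefficient exactly~$1$ in both lines, so adding the inequalities collapses them into $\sum_{a \in A}(G(A) - G(A - a + \pi(a)))$. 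Feasibility of every Brualdi swap, together with local optimality of $A$ with respect to $G$, makes each of these summands nonnegative; dropping the sum and substituting $d(B) = \lambda_d\, g(B)$, $f(B) = \lambda_f\, g(B)$ with $\lambda_d + \lambda_f = 1$ yields
\begin{equation*}
g(A) \;\geq\; \left(1 - \lambda_d\, \tfrac{4}{k} - \lambda_f\, \tfrac{c}{e}\right) g(B),
\end{equation*}
which is the claimed bound, and the $1 - \max\{\tfrac{4}{k}, \tfrac{c}{e}\}$ form follows immediately from $\lambda_d + \lambda_f = 1$.

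The one technical snag is that~\eqref{eq:matAverageImpOpt} is conditional on $d(A) \leq d(B)$; since here $B = \OPT$ maximizes $g$ rather than $d$, the case $d(A) > d(B)$ is a priori possible and must be handled separately. My plan for this case is to fall back on the unconditional refinement~\eqref{eq:matAverageImpDetailed}, which yields the same template with slightly weaker coefficients $1 - \tfrac{4}{k+2}$ in front of $d(B)$ and $\tfrac{k}{k+2}$ in front of the $d$-swap sum; after combining with~\eqref{ineq:sviridenko} and using $G$-local optimality, the residual $\tfrac{2}{k}\, d(A)$ slack can be absorbed via $d(A) \leq g(A)$, and the resulting bound still dominates $1 - \max\{\tfrac{4}{k}, \tfrac{c}{e}\}$ in this regime. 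Finally, upgrading a true local optimum to one reached efficiently by the non-oblivious search, at the cost of an arbitrarily small additive $\epsilon$, is entirely standard and proceeds exactly as in the proof of Theorem~\ref{thm:convLSMat}, so I would not belabor that step.
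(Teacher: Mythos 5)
Your overall strategy is the paper's: pair $A$ with $\OPT$ via a Brualdi bijection, add the negative-type swap inequality of Lemma~\ref{lem:matAverageImp} to the curvature-aware non-oblivious inequality~\eqref{ineq:sviridenko}, collapse the two swap sums into a single potential $G$, and invoke $G$-local optimality. Your main case ($d(A)\leq d(\OPT)$) is correct and matches the paper's conclusion. The difference is the choice of potential: you take $G=d+F$, which forces you to rely on the conditional inequality~\eqref{eq:matAverageImpOpt} and hence to split off the case $d(A)>d(\OPT)$ --- a case that genuinely can occur, since $\OPT$ maximizes $g$ rather than $d$.

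It is precisely this second case where your argument has a gap. If you keep the $d$-swap coefficient at $1$ (so that it aligns with $G=d+F$), rearranging~\eqref{eq:matAverageImpDetailed} gives $d(A)\geq(1-\frac{2}{k})d(\OPT)-\frac{2}{k}d(A)+\sum_{a}(d(A)-d(A-a+\pi(a)))$; absorbing $-\frac{2}{k}d(A)\geq-\frac{2}{k}g(A)$ as you suggest and dividing by $1+\frac{2}{k}$ degrades \emph{both} coefficients, yielding $g(A)\geq\frac{1-2/k}{1+2/k}d(\OPT)+\frac{1-c/e}{1+2/k}f(\OPT)$. The $d$-coefficient $\frac{k-2}{k+2}\geq1-\frac{4}{k}$ is fine, but the $f$-coefficient becomes $(1-\frac{c}{e})(1-\frac{2}{k+2})$, which is strictly below $1-\frac{c}{e}$ and does not recover the stated bound (in particular, when $\frac{c}{e}>\frac{4}{k}$ the resulting guarantee can drop below $1-\max\{\frac{4}{k},\frac{c}{e}\}$). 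If instead you normalize the other way, the $d$-swap sum carries coefficient $\frac{k}{k+2}\neq1$ and no longer matches $G$, so local optimality does not kill it. The paper sidesteps the entire case split by rescaling the potential: set $G=(1-\frac{2}{k})d+F$ and multiply the \emph{unconditional} inequality~\eqref{eq:matAverageImpDetailed} by $1-\frac{2}{k}$, so that the $d$-swap sum acquires coefficient exactly $1-\frac{2}{k}$, the $d(\OPT)$-coefficient becomes $(1-\frac{2}{k})^2\geq1-\frac{4}{k}$, and the $d(A)$-coefficient becomes $(1-\frac{2}{k})(1+\frac{2}{k})\leq1$, with no loss on the $f$ side. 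I recommend adopting that potential; the rest of your argument then goes through verbatim and without cases.
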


\begin{proof}
For the function $f$, consider the potential function $F$ as defined in inequality (\ref{ineq:sviridenko}) (see \cite{sviridenko2015optimal}); and for $g = d + f$, define the potential function $G=\left(1-\frac{2}{k}\right)d + F$.
Notice that inequality~\eqref{eq:matAverageImpDetailed} implies
\begin{equation*}
d(A) \geq  \left(1-\frac{4}{k}\right) d(\OPT) +
\left(1-\frac{2}{k}\right)\sum_{a\in A}\left( d(A) - d(A-a+\pi(a))\right)\enspace,
\end{equation*}
which follows by multiplying~\eqref{eq:matAverageImpDetailed} by $1-\frac{2}{k}$, rearranging terms, and using $(1-\sfrac{2}{k})^2 \geq 1-\sfrac{4}{k}$, and $(1+\sfrac{2}{k})\cdot (1-\sfrac{2}{k}) \leq 1$.

Combining the above inequality with \eqref{ineq:sviridenko}, we obtain for any basis $A$ of $M$
\begin{equation*}
g(A)\geq g(\OPT) - \frac{4}{k} d(\OPT)-\frac{c}{e} f(\OPT) +\sum_{a\in A} \left(G(A)-G(A-a+\pi(a))\right)\enspace,
\end{equation*}
where $\pi:A\rightarrow \OPT$ is a Brualdi bijection between $A$ and $\OPT$.
We run the non-oblivious local search algorithm with respect to the potential $G$. Let $A$ be a local optimum with respect to $G$. The local optimality of $A$ implies $G(A)-G(A-a+\pi(a))\geq 0$, which, by the above inequality, implies
\begin{equation*}
g(A)\geq \left(1-\lambda_d \frac{4}{k}-\lambda_f \frac{c}{e}\right)g(\OPT)\enspace.
\end{equation*}
To complete the proof, notice that this approximation ratio is a convex combination of $1-\frac{4}{k}$ and $1-\frac{c}{e}$, and hence it is larger than the smaller of the two values.
\end{proof}

For completeness, we highlight that even for metric distances $d$, the recent non-oblivious local search procedures presented in \cite{filmus2014monotone,sviridenko2015optimal} lead to a strengthening of the result of Borodin et al.~\cite{borodin2012max}.

\begin{theorem}\label{thm:d+f_metric}
Consider \MSDf over a matroid constraint, where $d$ is metric, and $f$ has curvature $c$. Let $\lambda_d\geq 0$ such that $\frac{d(O)}{g(O)}\leq \lambda_d$. Then, there is a non-oblivious local search algorithm such that any locally optimal solution has an approximation guarantee of
\begin{equation*}
1-\lambda_d \frac{1}{2}-(1-\lambda_d)\frac{c}{e}\enspace.
\end{equation*}
\end{theorem}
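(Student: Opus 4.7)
The plan is to mirror the argument used for Theorem~\ref{thm:d+f_NT}, replacing the strong negative-type local-search guarantee~\eqref{eq:matAverageImpDetailed} by the weaker metric guarantee~\eqref{ineq:metric} of Borodin et al., while still combining it with the Filmus--Ward--Sviridenko potential for $f$. Concretely, let $F$ be the potential function associated with $f$ from~\eqref{ineq:sviridenko}, and define the combined potential $G = \tfrac{1}{2}\, d + F$. The non-oblivious local search iterates single-element swaps trying to improve $G$, and we analyze any basis $A$ that is locally optimal with respect to $G$.

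For such an $A$, fix a Brualdi bijection $\pi : A \to \OPT$. Applying~\eqref{ineq:metric} with $B=\OPT$ (which uses the standing assumption $k \geq 3$) gives
\[
d(A) \;\geq\; \tfrac{1}{2}\, d(\OPT) + \tfrac{1}{2}\sum_{a\in A}\bigl(d(A) - d(A-a+\pi(a))\bigr),
\]
and~\eqref{ineq:sviridenko} gives the analogous bound for $f$. Summing the two, the swap terms collapse into $\sum_{a\in A}\bigl(G(A) - G(A-a+\pi(a))\bigr)$, which is nonnegative by the local optimality of $A$, so
\[
g(A) \;\geq\; \tfrac{1}{2}\, d(\OPT) + \bigl(1 - \tfrac{c}{e}\bigr)\, f(\OPT).
\]

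The final step is purely algebraic: writing $d(\OPT) = g(\OPT) - f(\OPT)$ turns the right-hand side into $\tfrac{1}{2}\, g(\OPT) + \bigl(\tfrac{1}{2} - \tfrac{c}{e}\bigr)\, f(\OPT)$. Because $c\in[0,1]$ and $e>2$, the coefficient $\tfrac{1}{2} - \tfrac{c}{e}$ is strictly positive, so I may invoke the lower bound $f(\OPT) \geq (1-\lambda_d)\, g(\OPT)$, which is exactly what the hypothesis $d(\OPT)/g(\OPT) \leq \lambda_d$ yields after passing to complements. A one-line rearrangement then produces the claimed factor $1 - \lambda_d/2 - (1-\lambda_d)\, c/e$.

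The main subtlety I anticipate is checking that the substitution in the final step goes in the correct direction; this is handled by the bound $c/e < 1/2$, which follows from $c \leq 1$ and never fails under the definition of curvature. As in Theorem~\ref{thm:d+f_NT}, I treat the potential $F$ as a value oracle, and the standard convergence arguments of~\cite{filmus2014monotone,sviridenko2015optimal} supply an efficient algorithm whose output is within a $(1+\epsilon)$ factor of a true local optimum; no new convergence analysis is needed beyond what is already invoked in the negative-type case.
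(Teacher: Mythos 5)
Your proposal is correct and follows essentially the same route as the paper: the paper's proof likewise defines $G=\frac{1}{2}d+F$ and adds~\eqref{ineq:metric} to~\eqref{ineq:sviridenko}, leaving the rest as "analogous to the previous one." Your explicit check that the substitution $f(\OPT)\geq(1-\lambda_d)g(\OPT)$ goes the right way because $\frac{c}{e}<\frac{1}{2}$ is a detail the paper leaves implicit (it is needed since $\lambda_d$ is only an upper bound on $d(\OPT)/g(\OPT)$ here), and it is handled correctly.
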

\begin{proof}
The proof is analogous to the previous one, except that we add up inequalities (\ref{ineq:metric}) and (\ref{ineq:sviridenko}) instead of the inequality provided by Lemma~\ref{lem:matAverageImp} and~\eqref{ineq:sviridenko}; and consequently for $g=d+f$ we define the potential function $G=\frac{1}{2}d + F.$
\end{proof}

\bibliographystyle{plain}
\bibliography{lit}

\end{document}